\documentclass[preprint,11pt,authoryear]{elsarticle}

\usepackage{amsmath}
\usepackage{float}
\usepackage{geometry}
\usepackage{array}
\usepackage{booktabs}
\usepackage{graphicx}
\usepackage{xcolor}
\usepackage{caption}
\usepackage{amsthm}
\usepackage{algorithm}
\usepackage{algpseudocode}
\usepackage{caption}
\usepackage{subcaption}
\usepackage{natbib}
\newtheorem{theorem}{Theorem}
\newtheorem{lemma}{Lemma}
\usepackage{hyperref}
\hypersetup{
    colorlinks=true,
    linkcolor=blue,
    filecolor=blue,      
    urlcolor=blue,
    citecolor=blue
}
\numberwithin{equation}{section}
\newcounter{assump} 
\renewcommand{\theassump}{AN\arabic{assump}}

\newenvironment{assumption}[1][]{
  \refstepcounter{assump}%
  \par\noindent
  {\theassump.}
}{\par\normalfont}
 \usepackage[T1]{fontenc}
\makeatletter \oddsidemargin 0in \evensidemargin 0in \textwidth 16cm
\usepackage{amssymb}
\usepackage{amsmath}

\begin{document}

\begin{frontmatter}



\title{\textbf{Asymptotically Optimal Sequential Confidence Interval for the Gini Index Under Complex Household Survey Design with Sub-Stratification}}


\author[1]{Shivam} 
\author[2]{Bhargab Chattopadhyay}
\cortext[cor1]{Corresponding author}
\author[1]{Nil Kamal Hazra\corref{cor1}}
\ead{nilkamal@iitj.ac.in}
\affiliation[1]{organization={Department of Mathematics},Department and Organization
            addressline={Indian Institute of Technology Jodhpur}, 
            city={Karwar},
            postcode={342037},
            country={India}}
            
\affiliation[2]{organization={School of Management and Entrepreneurship},Department and Organization
            addressline={Indian Institute of Technology Jodhpur}, 
            city={Karwar},
            postcode={342037},
            country={India}}
\date{}

\begin{abstract}
We examine the optimality properties of the Gini index estimator under complex survey design involving stratification, clustering, and sub-stratification. While Darku et al. (Econometrics, \textbf{8}, 26, 2020) considered only stratification and clustering and did not provide theoretical guarantees, this study addresses these limitations by proposing two procedures—a purely sequential method and a two-stage method. Under suitable regularity conditions, we establish uniform continuity in probability for the proposed estimator, thereby contributing to the development of random central limit theorems under sequential sampling frameworks. Furthermore, we show that the resulting procedures satisfy both asymptotic first-order efficiency and asymptotic consistency. Simulation results demonstrate that the proposed procedures achieve the desired optimality properties across diverse settings. The practical utility of the methodology is further illustrated through an empirical application using data collected by the National Sample Survey agency of India.

\end{abstract}

\begin{keyword}


 Asymptotic Consistency, Asymptotic Efficiency, Complex household survey design, Gini index.
\end{keyword}

\end{frontmatter}





\let\WriteBookmarks\relax
\def\floatpagepagefraction{1}
\def\textpagefraction{.001}

\section{Introduction}\label{1.S1}
Economic inequality within a state or country primarily arises from the unequal distribution of income among individuals. To address this disparity, policymakers often design economic strategies informed by the underlying income distribution. Over the years, numerous measures of inequality have been proposed and extensively studied in the literature. Among these, the Gini index is one of the most used statistical measures for quantifying economic inequality in a population. 

Suppose $X$ is a random variable which represents an individual's income in a given population. Then, $G_X$, the population Gini index, is expressed as
\begin{eqnarray}\label{1.GL}
G_X=1-2\int_{0}^{1}\phi(p)dp.
\end{eqnarray}
Here $ \phi(\cdot)$ is the Lorenz function given by
\[\phi(p)=\frac{\alpha(p)}{\mu},\qquad \alpha(p)=\int_{0}^{p}z(u)du,\]
and 
$z(u)=\inf\{x: u\leq F(x)\}, u\in [0,1],$
is the $u$-th quantile of a population and $F(\cdot)$ being the c.d.f. (cumulative distribution function) of the random variable, $X$.
For a comprehensive review on Gini index and problems related to it, we refer the reader to  \cite{Delta2023}, \cite{IKS2025}, \cite{SIM}, \cite{Mukho21}, and the references therein. For other nonparamteric inequality measures and their characteristics, we refer to \cite{jokiel2025nonparametric}. \cite{sang2020depth} estimated the Gini index by solving some non-smooth U-statistic based estimating equations using a weighted jackknife empirical likelihood method. \cite{egozcue2019compositional} proposed an inequality index based on  Aitchison norm and compared it with Gini and Atkinson inequality indices using the mean household gross income per capita data from autonomous communities in spain.\\
To estimate the Gini index, data are typically collected through sample survey drawn from the target population. 
Several inference procedures for the Gini index under SRS have been proposed, including those by \cite{DARKU23}, \cite{SKDBMRPE}, \cite{SKDBGINI}, and \cite{dvd2009}, among others. However, real-world populations are often heterogeneous and geographically dispersed, making SRS inadequate. Consequently, many national statistical agencies employ complex household survey designs to capture more accurate representations of population characteristics. For example, the Household Consumption Expenditure Survey\footnote{\url{http://microdata.gov.in/NADA/index.php/catalog/CEXP/?page=1\&sort\_order=desc\&ps=15\&repo=CEXP$}}, conducted by the National Sample Survey (NSS) Organization in India, collects detailed data on household consumption expenditure across a wide range of items, recorded in both quantity and value terms. Similarly, the Land and Livestock Holding Survey\footnote{\url{https://microdata.gov.in/NADA/index.php/catalog/LLS/?page=1&sort_order=desc&ps=15&repo=LLS}} provides information on operational land holdings and livestock ownership in rural areas. These surveys serve distinct purposes and are essential for capturing key economic and social indicators. The survey designs incorporated in these studies typically involve stratification, clustering, and sub-stratification to better capture population heterogeneity. Consequently, such surveys are commonly referred to as complex household surveys. A widely used complex survey design, as employed by \cite{BD2005}, stratifies the population and then forms clusters of households within each stratum. Unlike the SRS scheme, these designs assign unequal selection probabilities to individuals and yield more accurate estimates. For details on the survey design used in complex household surveys used by different survey agencies, along with the corresponding limit theorem, we refer the reader to \cite{BD2007}.
For a complex household survey design, \cite{DARKU} addressed the construction of a confidence interval for Gini index with bounded-width under such a design. To account for the precision and accuracy of the such a confidence interval, \cite{DARKU} used sequential analysis approaches which adaptively determined the optimal number of clusters thereby attaining a confidence interval for Gini index with bounded-width. Using simulations they demonstrated that, for sufficiently small widths, the final number of clusters closely matches the optimal on average, and proved that the purely sequential procedure guarantees the pre-specified interval width is not exceeded. 
For a detail review on sequential analysis and their applications, we refer to \cite{tayanagi2024change}, \cite{mukho08}, \cite{ham2023hypothesis}, \cite{navarro2021high} and the references therein.
\subsection{Contribution of this article}
While \cite{DARKU} proposed two-stage and purely sequential procedures for finding confidence intervals for the Gini index with bounded-width under complex survey design, their framework does not account for sub-stratification of clusters within each stratum, a feature common in practical household surveys. Moreover, they did not establish key asymptotic properties, such as asymptotic first-order efficiency and asymptotic consistency, for their procedures. These properties are critical to ensure that the final cluster size obtained via sequential sampling is nearly optimal on average while also achieve the desired confidence interval with controlled precision and desired accuracy.

Motivated by these gaps, this paper makes several key contributions. We propose purely sequential and two-stage procedures for finding confidence intervals for Gini index with bounded-width under complex household survey design that incorporate sub-stratification of clusters within each stratum, based on the economic conditions of households. Under suitable regularity conditions, we establish uniform continuity in probability for the proposed estimator under this design, thereby contributing to the development of random central limit theorems. Furthermore, we show that the resulting procedures satisfy both asymptotic first-order efficiency and asymptotic consistency. Theoretical findings are validated through extensive simulation studies and an empirical application based on data collected by India’s official survey agency, the National Sample Survey (NSS). Additionally, we examine the impact of sub-stratification on the estimation of the Gini index and its associated standard error.

The rest of this paper is presented as follows: Section \ref{1.S2} describes the complex household survey design used in this paper. Further,  we obtain the Gini index estimator under the survey design. Also, we introduce the list of notation and the set of assumptions used in subsequent sections. In Section \ref{1.S3}, we propose both purely sequential and two-stage procedures for obtaining a confidence interval for the Gini index with bounded-width. In Section \ref{1.S4}, we discuss the efficiency and the consistency properties of the procedures discussed in Section \ref{1.S3}. In Section \ref{1.S5}, we do several simulation studies to validate the theoretical findings of this paper. Subsequently, a real data analysis is done in Section \ref{1.S6}. {In Section \ref{1.S7}, we investigate how sub-stratification influences the final number of clusters and the accuracy of Gini index estimation.} Finally, the concluding remarks are given in Section \ref{1.S8}. For improved readability, all proofs of the theorems along with all supporting lemmas are deferred to the  Appendix.
\section{Complex survey design and Estimator of Gini index}\label{1.S2}
Assume that there are $ S $ distinct strata in the population which is indexed by $ s = 1, 2, \ldots, S $. Each of the $ s^{\text{th}} $ strata is further divided into $ H_s $ clusters, indexed by $ c_s = 1, 2, \ldots, H_s $. Thus, the total number of clusters across all strata is $H = \sum_{s=1}^{S} H_s$. Additionally, each cluster within a stratum is subdivided into two sub-strata based on household affluence: affluent ($ b_{c_s} = 1 $) and non-affluent ($ b_{c_s} = 2 $). Further, suppose within the $ c_s^{\text{th}} $ cluster of the $ s^{\text{th}} $ stratum, the $ b_{c_s}^{\text{th}} $ sub-stratum contains $ M_{sc_sb_{c_s}} $ households. Therefore, each cluster contains a total of $M_{sc_s} = \sum_{b_{c_s}=1}^{2} M_{sc_sb_{c_s}} \quad \text{households}$.

For estimation, suppose that from the $ s^{\text{th}} $ stratum, a sample of $ n_s $ clusters is drawn using probability proportional to size sampling with replacement, where size refers to the number of households. Within each selected cluster, $k$ households are selected at random from each sub-stratum using simple random sampling. The number of sampled clusters is
\[n = \sum_{s=1}^{S} n_s, \quad \text{with} \quad \frac{n_s}{n} = \frac{H_s}{H} = a_s.
\]

Let $ x_{sc_sb_{c_s}h} $ denote the observed value for the $ h^{\text{th}} $ household in sub-stratum $ b_{c_s} $ of cluster $ c_s $ in stratum $ s $ (e.g., household income or expenditure). Each sampled household is assigned a weight $ W_{sc_sb_{c_s}h} $, which is the inverse of its probability of inclusion in the sample. This inclusion probability is the product of the probability of including cluster $ c_s $ from stratum $ s $ $(p_{c_s} )$ and the probability of selecting household $ h $ from sub-stratum $ b_{c_s} $ of the selected cluster $( p_{b_{c_s}h})$. Since we have sample $n_s$ clusters with probability proportional to the size with replacement and $k$ households by simple random sampling, the values of $p_{c_s}$ and $p_{b_{c_s}h}$ are as follows:
\[p_{c_s}=\frac{n_sM_{sc_s}}{\sum_{c_s=1}^{H_s}M_{sc_s}},  p_{b_{c_s}h}=\frac{k}{M_{sc_sb_{c_s}}}.\]
Therefore, the sampling weight is given by
\[
W_{sc_sb_{c_s}h} = \frac{1}{p_{c_s} \cdot p_{b_{c_s}h}}=\frac{(\sum_{c_s=1}^{H_s}M_{sc_s})M_{sc_sb_{c_s}}}{n_sM_{sc_s}k}.
\]
Therefore, the total of per-household weights is 
\[W=\sum_{s=1}^{S}\sum_{c_s=1}^{n_s}\sum_{b_{c_s}=1}^{2}\sum_{h=1}^{k}W_{sc_sb_{c_s}h}.\]
Define normalized weight as 
\[w_{sc_sb_{c_s}h}=W^{-1}W_{sc_sb_{c_s}h}.\] 
Using the normalized weights $ w_{sc_sb_{c_s}h} $, we now define the unbiased estimators for key population quantities under the complex design. The unbiased estimator of $\mu$ and the cumulative distribution function $F(x)$ are respectively given by
\[\widehat{\mu}_{n}=\sum_{s=1}^{S}\sum_{c_s=1}^{n_s}\sum_{b_{c_s}=1}^{2}\sum_{h=1}^{k}w_{sc_sb_{c_s}h}x_{sc_sb_{c_s}h},\text{  and      }\widehat{F}_{n}(x)=\sum_{a=1}^{S}\sum_{d=1}^{n_a}\sum_{b_{d}=1}^{2}\sum_{c=1}^{k}w_{adb_{d}c}\textbf{1}(x_{adb_{d}c}\leq x).\]
Based on our design, an estimator of the Gini index $G_X$ is
\begin{eqnarray}\label{1.3}
\widehat{G}_n = 1 - \frac{2}{\widehat{\mu}_n} \sum_{s=1}^{S} \sum_{c_s=1}^{n_s}\sum_{b_{c_s}=1}^{2} \sum_{h=1}^{k}w_{sc_sb_{c_s}h} x_{sc_sb_{c_s}h} \left(1 - \widehat{F}_n(x_{sc_sb_{c_s}h}) \right).
\end{eqnarray}
Alternatively, using the Lorenz curve formulation, the Gini index may be expressed as:
\[\widehat{G}_n=1-2\int_{0}^{1}\widehat{\phi}_n(p)dp,\]
where for $p \in [0,1],$  $ \widehat{\phi}_n(.)$ is the estimator of Lorenz function is given by 
\[\widehat{\phi}_n(p)=\frac{\widehat{\alpha}_n(p)}{\widehat{\mu}_n},\qquad \widehat{\alpha}_n(p)=\int_{0}^{p}\widehat{z}_n(u)du,\]
where for $p\in[0,1]$, 
\[\widehat{z}_n(p)=\text{inf}_{s,c_s,b_{c_s},h}\{x_{sc_sb_{c_s}h}; \widehat{F}_n(x_{sc_sb_{c_s}h})\geq p\}\]

is the $p^{th}$ sample quantile.

We now state the asymptotic distribution of the Gini estimator under the complex survey design. This is essential for constructing a bounded-width confidence interval with prescribed accuracy. Before proceeding with finding a bounded-width confidence interval, we present the notations and assumptions that will be used throughout the paper.

\noindent\textbf{Notations:}\\
N1. For each $p\in[0,1]$, define $\theta=(\mu,\alpha(p),z(p))$ is the population parameter. Let $\theta_0=(\mu_0,\alpha_0(p),z_0(p))$ is the true value of population parameter, and $\Theta$ is the parameter space.\\
N2. \begin{align}
\Tilde{\mathbf{m}}(x, \theta) &=
\begin{pmatrix}
\Tilde{m}^1(x, \theta) & 
\Tilde{m}^2(x, \theta) & 
\Tilde{m}^3(x, \theta)
\end{pmatrix},\notag
\end{align}
where, $\Tilde{{m}}^1(x,\theta)=\mu-x,\Tilde{{m}}^2(x,\theta)=\alpha(p)-x\textbf{1}(x\leq z(p)),\text{ and }\Tilde{{m}}^3(x,\theta)=p-\textbf{1}(x\leq z(p))$.\\
N3. $\Tilde{\textbf{m}}_i(\theta)=\sum_{s=1}^{S}\frac{1}{a_s}\textbf{1}(s_i=s)\frac{(\sum_{c_{s_i}=1}^{H_{s_i}}M_{s_i{c_{s_i}}})}{M_{s_ii}k}\sum_{b_{i}=1}^{2}M_{s_i{i}b_{i}}\sum_{h=1}^{k}\Tilde{\textbf{m}}(x_{s_iib_{i}h},\theta).$\\

\noindent\textbf{Assumptions:}\\
\begin{assumption}\label{1.AN1}
   $\Tilde{{m}}^j(\cdot,\theta)$ is continuous at each $\theta$ with probability 1, for each $j=1, 2, 3$.
\end{assumption}
\begin{assumption}\label{1.AN2}
     There exists $d(\cdot)$ with $E(d(\cdot))<\infty$ such that $|\Tilde{{m}}^j(t,\theta)|\leq d(t)$ for each $j=1,2,3$ for all $t$. 
\end{assumption}
\begin{assumption}\label{1.AN3}
 The parameter space $\Theta$ is compact. 
\end{assumption}
\begin{assumption}\label{1.AN4}
    $E(\Tilde{\textbf{m}}(x,\theta))$ is continuously differentiable at $\theta_0$ and $\frac{1}{n}\sum_{i=1}^{n}\frac{\partial}{\partial\theta}E(\Tilde{\textbf{m}}_i(\theta))\xrightarrow{a.s.} K,$ where $K$ is non-singluar matrix. 
\end{assumption}
\begin{assumption}\label{1.AN5}
     The sequence $v_n(\theta)=\frac{1}{\sqrt{n}}\sum_{i=1}^{n}\{\Tilde{\textbf{m}}_i(\theta)-E(\Tilde{\textbf{m}}_i(\theta))\}$ is stochastically equicontinuous.
\end{assumption}
\begin{assumption}\label{1.AN6} $\text{sup}_{\theta\in\Theta}E|\Tilde{\textbf{m}}_i(\theta)|^3<\infty.$
\end{assumption}
\begin{assumption}\label{1.AN7}
$\text{lim}_{n\rightarrow\infty}\sum_{i=1}^{n}\frac{Var(\Tilde{\textbf{m}}_i(\theta))}{i^2}<\infty \text{ and } \text{lim}_{n\rightarrow\infty}\frac{1}{n}\sum_{i=1}^{n}{Var(\Tilde{\textbf{m}}_i(\theta))}=W_0$. 
\end{assumption}
\begin{assumption}\label{1.AN9}
    Let $s,s'=1,2,..,S, c_s=1,2,..,n_s, c'_{s'}=1,2,..,n_{s'}, b_{c_s}, b_{c'_{s'}}=1,2 \text{ and } h,h'=1,2,..,k .\\ \text{ If } s\neq s' \text{ or } c_s\neq c'_{s'} , \{x_{sc_sb_{c_s}h}\} \text{ and }  \{x_{s'c'_{s'}b_{c'_{s'}}h'}\}$ are independent unless they are dependent. 
\end{assumption}
\begin{assumption}\label{1.AN10}
     For $s\neq s', \{x_{sc_sb_{c_s}h}\} \text{ and }\{x_{s'c'_{s'}b_{c'_{s'}}h'}\}$ are not necessarily identically distributed. 
\end{assumption}
\begin{assumption}\label{1.AN11}
     The variables at the cluster level across different clusters within each stratum are identically distributed. 
\end{assumption}
\begin{assumption}\label{1.AN12}
    The variance of strata level variables is finite for each stratum.
\end{assumption}
\begin{assumption}\label{1.AN13}
    The cluster totals have absolute $2+\delta$ raw moments $(\delta>0)$ which are uniformly bounded by $B_\delta>0$.
\end{assumption}
\begin{assumption}\label{1.AN14}
     For each $p\in[0,1], F(x) \text{ and } \frac{dF(x)}{dx}$ are continuous for all $x$ in neighbourhood $J$ of $p^{th}$ population quantile $z(p)$. 
\end{assumption}
\begin{assumption}\label{1.AN15}
     For some $0<C<\infty$,
$Var\{\widehat{F}_n(x+\delta)-\widehat{F}_n(x)\}\leq Cn^{-1}|\delta|$
for all $n$ and for $x \text{ and } x+\delta \text{ in } J$. 
\end{assumption}
\begin{assumption}\label{1.AN16}
 $\mathbb{E}[|X| \mid s] < \infty$.
\end{assumption}
In what follows, we below give statistical  interpretations of the above-mentioned assumptions. Assumption \ref{1.AN1} implies that small perturbations in the parameter $\theta$ lead to correspondingly small changes in $\tilde{m}^j(\cdot, \theta)$. Assumption \ref{1.AN4} ensures that $E(\tilde{\mathbf{m}}(x, \theta)) = 0$ holds only when $\theta = \theta_0$, meaning the moment conditions are satisfied uniquely at the true parameter value. Assumptions \ref{1.AN1}--\ref{1.AN4} together guarantee the consistency of the estimator $\hat{\theta}$, ensuring that it converges in probability to the true value $\theta_0$. To maintain stability in the long-run variability of the estimating functions, assumption \ref{1.AN7} is required. Consider a population partitioned into two strata, namely, Rural and Urban. The Rural stratum is further divided into clusters representing villages, while the Urban stratum consists of clusters corresponding to city blocks. Since rural and urban incomes are shaped by distinct economic environments, employment structures, and cost-of-living conditions, it is reasonable to treat them as independent. Likewise, incomes from different villages or city blocks are assumed to be independent due to differing local economic conditions, whereas individuals within the same village or block experience similar environments, creating intra-cluster dependence. This motivates the assumption of dependence within clusters or strata but independence across them, as stated in assumption \ref{1.AN9}. As household income distributions vary across strata due to demographic, geographic, and socioeconomic differences, it is reasonable to assume that incomes from different strata are not identically distributed, corresponding to assumption \ref{1.AN10}. In stratified sampling, strata are typically formed to be internally homogeneous with respect to key characteristics such as income, region, or occupation. Clusters within the same stratum (for example, villages or census blocks) represent comparable population segments and are therefore expected to have identically distributed cluster-level variables, as reflected in assumption \ref{1.AN11}. Taken together, assumptions \ref{1.AN1}--\ref{1.AN11} establish that, for sufficiently large cluster sizes, the estimator $\hat{\theta}$ is asymptotically normally distributed. Asymptotic normality is very useful in constructing confidence interval for the true parameter value. Assumption \ref{1.AN14} ensures that the quantile function is well-defined and exhibits smooth behavior, while assumption \ref{1.AN15} guarantees control over random fluctuations in the empirical distribution function within small intervals. In brief, these assumptions are essential because they ensure the validity of applying the law of large numbers, central limit theorem, and Taylor series expansion in the analysis, enabling consistent estimation and reliable inference.

Under the assumptions \ref{1.AN1}-\ref{1.AN11} and \ref{1.AN16} stated above, following \cite{BD2007}, as $ n_s \to \infty $ at the same rate for all strata, we have
\[\sqrt{n}(\widehat{G}_n-G_X)\xrightarrow{d}N(0,\xi^2),\]
where $\xi^2$ denotes the asymptotic variance of $\sqrt{n}\widehat{G}_n$. Using the asymptotic distribution, we construct a $100(1-\alpha)\%$ confidence interval for the population Gini index $G_X$ whose accuracy is a pre-assigned number $\alpha\in(0,1)$ given by
\begin{eqnarray}\label{1.4}
P\left(\widehat{G}_n-z_{\frac{\alpha}{2}}\frac{\xi}{\sqrt{n}}<G_X<\widehat{G}_n+z_{\frac{\alpha}{2}}\frac{\xi}{\sqrt{n}}\right)\geq1-\alpha.
\end{eqnarray}
To construct the bounded-width confidence interval, we fix the width of the confidence interval to $\omega$, similar to \cite{DARKU}, we have,
\begin{align}
    L=2z_\frac{\alpha}{2}\frac{\xi}{\sqrt{n}}\leq\omega \implies \frac{\omega\sqrt{n}}{2\xi}\geq z_{\frac{\alpha}{2}}\implies n\geq\frac{{4z_{\frac{\alpha}{2}}^2{\xi}^2}}{\omega^2}\equiv C
\end{align}
We note that $z_{{\alpha}/{2}}$ is the $100(1-{\alpha}/{2})^{th}$ percentile of $N(0,1)$ and $C=\lceil{{4z_{\alpha/2}^2}{\xi}^2}\omega^{-2}\rceil$ is the optimal total number of clusters to be sampled from all strata. Thus to ensure that the confidence interval width does not exceed $ \omega $, we need to collect data from at least $C$ clusters. Also, the optimal allocation of clusters to the $s^{th}$ stratum is $C_s=Ca_s$, where $a_s$ is the allocation proportion. If $C$ is known, the $100(1-\alpha)\%$ confidence interval for $G_X$, with width not exceeding pre-assigned confidence interval bound $\omega$ is
\[\Bigg(\widehat{G}_C-z_{\frac{\alpha}{2}}\frac{\xi}{\sqrt{C}},\widehat{G}_C+z_{\frac{\alpha}{2}}\frac{\xi}{\sqrt{C}}\Bigg).\]
Without knowing the distribution of income, the value of $C$ is unknown; in particular, the value of $\xi^2$ is unknown. 
Since $C$ is unknown, one must survey from the clusters at least in two stages to construct a bounded-width confidence interval achieving the desired coverage probability at least approximately. For that we require a consistent estimator of $\xi^2$. Based on \cite{BINDER}, a consistent estimator of $\xi^2$ is 
\begin{eqnarray}\label{V.E.}
V_n^2=n\sum_{s=1}^{S}\frac{n_s}{n_s-1}\sum_{c_s=1}^{n_s}\left(u_{sc_s}-\bar{u}_s\right)^2.
\end{eqnarray}
where,
\begin{align}
\bar{u}_s &= \frac{1}{n_s} \sum_{c_s=1}^{n_s} u_{sc_s}, \quad \text{and } 
u_{sc_s}= \frac{2}{\widehat{\mu}} \sum_{b_{c_s}=1}^{2} \sum_{h=1}^{k} w_{sc_sb_{c_s}h} \Bigg[ 
\left( \widehat{F}(x_{sc_sb_{c_s}h}) - \frac{\widehat{G}_n + 1}{2} \right) x_{sc_sb_{c_s}h} \notag\\
&\,\,\, \quad\quad\quad\quad\quad\quad\quad\quad + \sum_{a=1}^{S} \sum_{d=1}^{n_a} \sum_{b_d=1}^{2} \sum_{c=1}^{k} 
w_{adb_dc} x_{adb_dc} \mathbf{1}(x_{adb_dc} \geq x_{sc_sb_{c_s}h})  - \frac{\widehat{\mu}}{2} (\widehat{G}_n + 1)\Bigg].\label{std2.3}
\end{align}
For a comprehensive review of variance estimators of Gini index, we refer to \cite{langel2013variance}.
We begin by constructing confidence intervals using arbitrarily fixed cluster size $n$. The confidence interval is given by  
\[
\Bigg(\widehat{G}_n - z_{\frac{\alpha}{2}}\frac{V_n}{\sqrt{n}}, \; \widehat{G}_n + z_{\frac{\alpha}{2}}\frac{V_n}{\sqrt{n}}\Bigg).
\]  
To examine the adequacy of such fixed cluster size choice, a simulation study is carried out using three synthetic populations generated from Gamma(2.649, 0.84), Pareto(20{,}000, 5), and lognormal(2.185, 0.562) distributions. For each selected cluster size $n$, the width of the constructed confidence interval is computed and averaged over 5000 repetitions. The results are summarized in Table \ref{1.T19}. The first column specifies the income distribution, while the second and third columns correspond to the fixed cluster sizes $n_1$ and $n_2$. The fifth and sixth columns display the average widths of the respective confidence intervals, denoted by $\overline{w}_{n_1}$ and $\overline{w}_{n_2}$. The eighth and ninth columns show the proportions of intervals whose widths exceed 0.015 for $n_1$ and $n_2$, respectively. The last column shows the proportions of intervals whose widths exceed 0.01 for $n_2$. From Table \ref{1.T19}, it is apparent that the cluster size $n_1$ produces intervals wider than the desired width of $0.015$, indicating insufficient precision, whereas the cluster size $n_2$ results in intervals that are much narrower. Moreover, for a stricter target width of $0.01$, even $n_2$ fails to meet the requirement. Now, consider selecting a cluster size $n_3$ that lies between $n_1$ and $n_2$. The results indicate that the fixed cluster size approach may not efficiently achieve the desired level of precision. Although the cluster size $n_2$ attains the target width of 0.015, a moderately smaller cluster size $n_3$ can also reach the required precision with high probability, thereby reducing the overall sampling cost. This outcome highlights a fundamental limitation of the fixed cluster size method. Since the optimal cluster size is not known in advance, it often results in either under-sampling or over-sampling. Hence, there is a strong reason for adopting an adaptive strategy, such as a sequential procedure, that can determine the cluster size dynamically to meet the desired precision more effectively.
\begin{table}[hbt!]
\caption{Simulation results under fixed cluster sizes $n_1$, $n_2$ and $n_3$.}
\centering
\resizebox{\textwidth}{!}{%
\begin{tabular}{ccccccccccc}
		\toprule
		Distribution & $n_1$&$n_2$ &$n_3$&$\overline{w}_{n_1}$&$\overline{w}_{n_2}$&$\overline{w}_{n_3}$&$\widehat{P}(w_{n_1}>0.015)$&$\widehat{P}(w_{n_2}>0.015)$&$\widehat{P}(w_{n_3}>0.015)$&$\widehat{P}(w_{n_2}>0.01)$\\
        \midrule
		 Gamma(2.649,0.84)&750&1500&1000&0.0163&0.0116&0.0142&0.9902&0.0&0.0354&1.0\\
          Pareto(20000,5)&350&700&500&0.0172&0.0122&0.0144&0.8008&0.0448&0.3206&0.9792\\
           lognormal(2.185,0.562)&900&1700&1200&0.0159&0.0116&0.0138&0.852&0.0&0.0586&1.0\\
      \bottomrule
	\end{tabular}%
    }
    \label{1.T19}
\end{table}
\section{Sequential Procedures}\label{1.S3}
In this section, we propose both two-stage and purely sequential sampling procedures to estimate the optimal sample size $C$, ensuring that the resulting bounded-width confidence interval achieving the desired confidence level $(1-\alpha)$ asymptotically.
\subsection{Purely Sequential Procedure}
In the purely sequential procedure, we begin by selecting a sample of $m_s$ clusters from each stratum $s$. Thus, the total number of clusters sampled in the first stage, often called the pilot stage, is $m=\sum^S_{s=1}m_s$. Thus the total pilot cluster size is $m$ and $m_s$ is the pilot cluster size from the $s^{\text{th}}$ stratum. Note as per the design, each cluster is subdivided into two sub-strata. From each of the sub-stratum, we randomly draw $k$ households. Using the data obtained is the pilot stage, we compute the estimator $V_m^2$ of the asymptotic variance $\xi^2$, and proceed to evaluate the following stopping rule.
\[N=N(\omega)(\leq H) \text{ is the smallest integer } n(\geq m) \text{ such that for  }\delta>0.\]
\begin{eqnarray}\label{1.5}
n\geq\frac{4z_{{\alpha/2}}^2}{\omega^2}\left(V_n^2+\frac{1}{n^\delta}\right)=\widehat{C} \text{ and } n_s\geq\widehat{C}_s=\widehat{C}a_s, \text{ for all } s,
\end{eqnarray}
where the term  $1/n^\delta$ is included, because the stopping rule \eqref{1.5} will be satisfied even for very small sample sizes in the absence of term $1/n^\delta$ due to the possibility of small value of $V_n^2$ at the early stages, and $V_n^2+1/n^\delta$ is still a consistent estimator of $\xi^2$.

If the above condition is satisfied, then we stop the procedure and our pilot size becomes the final sample size. If not, then we choose $m'(\geq1)$ additional number of clusters from each stratum having $n_s<\widehat{C}_s$ and from each selected additional cluster we randomly select $k$ households. Using this dataset, we estimate $\xi^2$ and check our stopping rule. We repeat this process until the stopping rule is met.
The final cluster size from each stratum $s$ is $N_s=Na_s$. Based on final cluster size $N$, the $100(1-\alpha)\%$ bounded-width confidence interval for Gini index $G_X$ is given by
\begin{eqnarray}\label{1.6}
\Bigg(\widehat{G}_N-z_{\alpha/2}\frac{V_N}{\sqrt{N}},\widehat{G}_N+z_{\alpha/2}\frac{V_N}{\sqrt{N}}\Bigg).
\end{eqnarray}
The pilot cluster size is computed as follows:

Darku et al. (\citeyear{DARKU}) computed the pilot cluster size for $\delta=1$. Through same procedure, for any $\delta>0$, we have
\[n\geq\frac{4z_{{\alpha/2}}^2}{\omega^2}\left(V_n^2+\frac{1}{n^\delta}\right)\geq\frac{4z_{{\alpha/2}}^2}{\omega^2}\left(\frac{1}{n^\delta}\right)\implies n^{\delta+1}\geq\frac{4z_{{\alpha/2}}^2}{\omega^2}\implies n\geq\left(\frac{2z_{{\alpha/2}}}{\omega}\right)^{2/\left(\delta+1\right)}.\]
Therefore, the number of total clusters to be selected during the pilot stage is
\[m=min\left\{H,max\left\{2,\Bigg\lceil\left(\frac{2z_{{\alpha/2}}}{\omega}\right)^{2/\left(\delta+1\right)}\Bigg\rceil\right\}\right\},\]
whereas the number of clusters to be selected from $s^{th}$ stratum during the pilot stage is
\begin{eqnarray}\label{1.Pilot}
m_s=min\left\{H_s,max\left\{2,\Bigg\lceil\left(\left(\frac{2z_{{\alpha/2}}}{\omega}\right)^{2/\left(\delta+1\right)}\right)\times a_s\Bigg\rceil\right\}\right\}.
\end{eqnarray}
\subsection{Two-Stage Procedure}
In two-stage procedure, at the first stage, we have to take a pilot cluster sample of size $t_s$ from each of the $s^{th}$ stratum, where $t_s$ is the same as $m_s$ given by equation \eqref{1.Pilot}. Based on a pilot cluster size of $t=\sum_{s=1}^{S}t_s$ collected at the pilot stage, the total final cluster size to be sampled  from all strata is
\begin{eqnarray}\label{1.7}
Q = \min \left\{ H, \max \left\{ t, \Bigg\lceil \frac{4z^2_{\alpha/2}}{\omega^2 }V_t^2 \Bigg\rceil \right\} \right\} = \min \{ H, Q^* \}.
\end{eqnarray}
Hence, the final cluster size from each stratum $s$ is
\[Q_s=min\{H_s,\lceil Qa_s\rceil\}.\]
Therefore, in the second stage, we have to select $Q_s-t_s$ number of clusters from each stratum $s$. Then, we have to collect data from randomly chosen $k$ households from each sub-stratum of each selected cluster. Based on the collected data, the $100(1-\alpha)\%$ confidence interval for Gini index $G_X$ is given by
\begin{eqnarray}\label{1.8}
\Bigg(\widehat{G}_Q-z_{\alpha/2}\frac{V_Q}{\sqrt{Q}},\widehat{G}_Q+z_{\alpha/2}\frac{V_Q}{\sqrt{Q}}\Bigg).
\end{eqnarray}
\section{Results}\label{1.S4}
In this section, we discuss the theoretical results related to the sequential procedures discussed in Section \ref{1.S3}. Under the assumptions provided in Section \ref{1.S2}, we prove the asymptotic efficiency and asymptotic consistency properties of our procedure. Now, we provide the theorems related to the asymptotic properties of our procedure. Theorem~\ref{1.TH1} is related to the asymptotic efficiency property, whereas Theorem~\ref{1.TH2} is related to the asymptotic consistency property.
\begin{theorem}\label{1.TH1}
If E$[V_n^2]$ exists and for each stratum $s$ of the population, the total number of clusters $H_s$ is large enough, then the following hold. \\ \textit{(i)} Let $N$ be the stopping rule for Purely Sequential Procedure defined in \eqref{1.5}, then as $\omega\rightarrow$ 0,
\[\frac{N}{C}\xrightarrow{a.s.}1,\text{ and } E\left(\frac{N}{C}\right)\rightarrow1.\]
\textit{(ii)} Let $Q$ be the stopping rule for the Two-Stage Procedure defined in \eqref{1.7}, then as $\omega\rightarrow$0,
\[\frac{Q}{C}\xrightarrow{a.s.}1,\text{ and } E\left(\frac{Q}{C}\right)\rightarrow1.\]
\end{theorem}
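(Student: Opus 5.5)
The plan is to follow the classical Chow--Robbins route for bounded-width sequential intervals. The single probabilistic input is strong consistency of the variance estimator: $V_n^2\xrightarrow{a.s.}\xi^2$ as $n\to\infty$, with all $n_s=na_s$ growing at the same rate. I would establish this first from the stated assumptions. Under \ref{1.AN1}--\ref{1.AN4} the $Z$-estimator $\widehat{\theta}$ is consistent. Assumption \ref{1.AN7}, the Kolmogorov-type condition $\sum_i Var(\Tilde{\mathbf{m}}_i(\theta))/i^2<\infty$, together with \ref{1.AN9}--\ref{1.AN13} (independence across clusters, identical distribution within strata, bounded $2+\delta$ moments) gives a strong law for the cluster-level averages $\frac1n\sum_i \Tilde{\mathbf{m}}_i$. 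Hence $\widehat{\mu}_n$, $\widehat F_n$ and $\widehat G_n$ converge a.s. The linearized scores $u_{sc_s}$ in \eqref{std2.3} are smooth functions of these quantities plus cluster sums, so the stratum-wise sample variances in \eqref{V.E.} converge a.s. to the corresponding population variances. Summing over strata with weights $a_s$ gives $\xi^2$. The hypothesis that $H_s$ is large ensures the truncations at $H$ and $H_s$ are inactive as $\omega\to0$, so I can ignore them. Note also that $C\sim 4z_{\alpha/2}^2\xi^2\omega^{-2}\to\infty$.

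For (i), write $\lambda=4z_{\alpha/2}^2/\omega^2$, so $C\approx\lambda\xi^2$. First, $N\to\infty$ a.s. as $\omega\to0$, because $N\ge m\to\infty$ and $N\ge \lambda N^{-\delta}$. The stopping rule then gives the sandwich
\[
\lambda\Big(V_N^2+N^{-\delta}\Big)\le N \le m' S+ m + \lambda\Big(V_{N-1}^2+(N-1)^{-\delta}\Big)I(N>m).
\]
The overshoot is at most one batch, since $N_s$ may exceed $\widehat C_s$ by at most $m'$ clusters per stratum. Dividing by $C$ and using $V_N^2,V_{N-1}^2\to\xi^2$ a.s. (random indices along a.s. convergent sequences), together with $m/C\to0$ (as $m=O(\omega^{-2/(1+\delta)})=o(\omega^{-2})$), yields $N/C\to1$ a.s.

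For the mean, Fatou gives $\liminf E(N/C)\ge1$. For the upper bound I take expectations in the right-hand inequality. The main obstacle is controlling $E[\lambda V_{N-1}^2]/C$, because $V^2$ is evaluated at a random index. My first attempt is to bound $V_{N-1}^2\le \sup_{n\ge m}V_n^2$ and show $\{\sup_{n\ge m}V_n^2\}$ is uniformly integrable, using that $E[V_n^2]$ exists (a hypothesis of the theorem). I would combine this with a Wiener-type maximal ergodic inequality for the reverse-martingale-like averages, invoking \ref{1.AN6} and \ref{1.AN13} for the needed $L^{1+\epsilon}$ control. This gives $E(N)\le C+o(C)$, and dominated convergence then finishes: $N/C$ is dominated by a u.i. family and converges a.s. to $1$, so $E(N/C)\to1$.

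For (ii) the argument is simpler because $V_t^2$ depends only on the pilot sample. We have $t\to\infty$ as $\omega\to0$, so $V_t^2\to\xi^2$ a.s. Also $\lambda V_t^2\le Q^*\le t+\lambda V_t^2+1$, and $t/C\to0$. Dividing by $C$ gives $Q/C\to1$ a.s. For the mean, $E(Q/C)\le t/C+E(V_t^2)/\xi^2+1/C+o(1)$, and Fatou gives the matching lower bound. It therefore suffices that $E(V_t^2)\to\xi^2$, which I would obtain from uniform integrability of $V_n^2$. That in turn follows from the $2+\delta$ moment bound \ref{1.AN13} applied to the squared cluster scores $u_{sc_s}^2$, whose pieces are bounded by cluster totals up to consistent factors; the existence of $E[V_n^2]$ is used here as well. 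Uniform integrability is again the delicate step, and I would prove it by bounding $V_n^2$ by a constant times an average of squared cluster totals plus a remainder vanishing a.s.
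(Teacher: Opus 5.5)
Your proposal is correct and follows essentially the same route as the paper. For the almost-sure part of (i), you use the stopping-rule sandwich, $V_N^2\to\xi^2$ a.s. along the random index, and $m/C\to0$. For the mean, you dominate $N/C$ by $m/C$ plus $\sup_{n\ge m}V_n^2/\xi^2$, control that supremum by Wiener's maximal inequality for reverse martingales (the paper uses Cauchy--Schwarz with the $L^2$ version), and finish by dominated convergence. For (ii), you use the pilot-based sandwich together with $E(V_t^2)\to\xi^2$, which is Lemma~\ref{1.L1}. Your extra bookkeeping (the batch overshoot $m'S$, the ceiling term, Fatou for the lower bounds, and $L^{1+\epsilon}$ rather than $L^2$ control, which fits the $2+\delta$ moment assumption better) refines the paper's argument rather than departing from it.
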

\textbf{Proof. } The proof of the theorem is given in Appendix.
\begin{theorem}\label{1.TH2} Under the assumptions provided in Section \ref{1.S2}, let $N$ be the stopping rule for Purely Sequential Procedure defined in \eqref{1.5}, then as $\omega\rightarrow$0,
\[P\left\{\widehat{G}_{N}-z_{\frac{\alpha}{2}}\frac{V_N}{\sqrt{N}}<G_X<\widehat{G}_{N}+z_{\frac{\alpha}{2}}\frac{V_N}{\sqrt{N}}\right\}\rightarrow1-\alpha.\]
The same will hold if we replace $N$ with $Q$, where $Q$ is the stopping rule for the Two-Stage Procedure defined in \eqref{1.7}.
\end{theorem}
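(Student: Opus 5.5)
The plan is to invoke Anscombe's random central limit theorem. Its conditions are a fixed-sample CLT for $\widehat{G}_n$, uniform continuity in probability of the normalized estimator sequence, and a random index $N$ with $N/C\to 1$ in probability. The second condition is the piece still missing.

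First, write $T_n=\sqrt{n}(\widehat{G}_n-G_X)/\xi$. By the result of \cite{BD2007} quoted in Section \ref{1.S2}, $T_n\xrightarrow{d}N(0,1)$ as $n\to\infty$ with all $n_s$ growing at the common rate $a_s$. This is also the regime of the sequential procedures, because the stopping rule \eqref{1.5} forces $n_s\ge \widehat{C}a_s$. Second, Theorem \ref{1.TH1} gives $N/C\to1$ a.s. (hence in probability), and $C\to\infty$ as $\omega\to0$.

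The third and main step is Anscombe's condition: for every $\varepsilon>0$ there exist $\eta>0$ and $n_0$ such that $P\bigl(\max_{|n'-n|\le \eta n}|T_{n'}-T_n|>\varepsilon\bigr)<\varepsilon$ for $n\ge n_0$. To get it, I would linearize $\widehat{G}_n$ through the estimating-equation representation of $\theta=(\mu,\alpha(p),z(p))$. Using \ref{1.AN4}, \ref{1.AN5} and Taylor expansion, $\sqrt{n}(\widehat{\theta}_n-\theta_0)=-K^{-1}n^{-1/2}\sum_{i=1}^n\{\Tilde{\mathbf m}_i(\theta_0)-E\Tilde{\mathbf m}_i(\theta_0)\}+o_p(1)$. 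The quantile component is handled by a Bahadur-type representation, which uses \ref{1.AN14} and \ref{1.AN15}; the bound $Var\{\widehat F_n(x+\delta)-\widehat F_n(x)\}\le Cn^{-1}|\delta|$ is exactly what controls the remainder. Integrating the Lorenz function over $p$ and applying the delta method then gives $\sqrt{n}(\widehat G_n-G_X)=n^{-1/2}\sum_{i=1}^n Z_i+R_n$. Here the $Z_i$ are independent cluster-level influence terms (independence from \ref{1.AN9}), identically distributed within strata (\ref{1.AN11}), with third moments bounded by \ref{1.AN6} and \ref{1.AN13}.

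For the leading sum $S_n=\sum_{i\le n}Z_i$, the independent-increments structure lets me apply Kolmogorov's maximal inequality within each stratum and sum over the finitely many strata. This gives
\begin{align*}
P\Bigl(\max_{n\le n'\le n(1+\eta)}|S_{n'}-S_n|>\varepsilon\sqrt{n}\Bigr)\le \frac{\sum_{n<i\le n(1+\eta)}Var(Z_i)}{\varepsilon^2 n}\le \frac{c\,\eta}{\varepsilon^2},
\end{align*}
using \ref{1.AN7}, and similarly on the left side. The factor $\sqrt{n/n'}-1=O(\eta)$ is absorbed in the same way. The remainder must satisfy $\max_{|n'-n|\le\eta n}|R_{n'}|\to0$ in probability, and this is the hardest part. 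Pointwise $o_p(1)$ is not enough here. I would first try the stochastic equicontinuity \ref{1.AN5} together with compactness \ref{1.AN3}, aiming for a uniform-in-$n'$ bound over the block. The fallback is to obtain a.s. consistency of $\widehat\theta_n$ from the summability condition in \ref{1.AN7} via Kolmogorov's SLLN, and then show the remainder is a.s. negligible, which handles the maximum automatically.

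With these pieces, Anscombe's theorem gives $T_N\xrightarrow{d}N(0,1)$ as $\omega\to0$. Next, $V_N^2\to\xi^2$ a.s.: this follows from the a.s. consistency of Binder's estimator $V_n^2$ together with $N\to\infty$ a.s. Slutsky's theorem then yields $\sqrt{N}(\widehat G_N-G_X)/V_N\xrightarrow{d}N(0,1)$, and the coverage probability converges to $1-\alpha$ because the normal distribution function is continuous at $\pm z_{\alpha/2}$.

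For the two-stage rule, part (ii) of Theorem \ref{1.TH1} gives $Q/C\to1$ a.s. The same Anscombe argument applies verbatim with $Q$ in place of $N$. The random index, and not its adaptedness, is all that the Anscombe condition requires.
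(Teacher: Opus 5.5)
\textbf{Verdict: genuine gap.} Your outer architecture is sound and is, in substance, the paper's. It runs Anscombe's argument on the fixed-size CLT of Section~\ref{1.S2}, takes $N/C\to1$ from Theorem~\ref{1.TH1}, uses $V_N^2\to\xi^2$ a.s.\ at the random index, and finishes with Slutsky. You are also right that the two-stage case needs only $Q/C\xrightarrow{P}1$, not adaptedness.

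The gap is that the step carrying the actual content, $\max_{|n'-n|\le\eta n}|R_{n'}|\xrightarrow{P}0$, is left open. Neither of your sketched routes closes it as described.
\begin{itemize}
\item \emph{Route via \ref{1.AN5}.} Stochastic equicontinuity puts $\sup_\theta$ inside the probability at each single $n$. It gives no control of a maximum over sample sizes $n'$ in a block. You would still need a genuine maximal inequality, for example Ottaviani's, using independence of the cluster increments. Because you integrate over $p$, you would also need the linearization uniformly in $p$, whereas \ref{1.AN5}, \ref{1.AN14} and \ref{1.AN15} are all stated for each fixed $p$.
\item \emph{Fallback via the SLLN.} Kolmogorov's SLLN gives $\widehat\theta_n\to\theta_0$ a.s., but $R_n$ is $\sqrt n$ times second-order terms. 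So $R_n\to0$ a.s.\ requires rates: an LIL for the linear parts and an almost-sure Bahadur--Kiefer bound for the quantile part. The latter needs more smoothness of $F$ near $z(p)$ than the continuity in \ref{1.AN14}. Assumption \ref{1.AN15} is only a variance bound, which yields in-probability control.
\end{itemize}

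For comparison, the paper shrinks the problem by Taylor-expanding $u/v$ at the random index $N$ itself. The quadratic term $\sqrt N R_N$ is then killed by Slutsky using only Lemma~\ref{1.L12}, with no maximal bound. Anscombe's condition is needed only for the components $\widehat\mu_n$ and $\int_0^1\widehat\alpha_n(p)\,dp$ (Lemmas~\ref{1.L3} and~\ref{1.L10}), which gives $A\xrightarrow{P}0$. The term $D$ uses the CLT at the deterministic size $C$.

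For $\int_0^1\widehat\alpha_n(p)\,dp$, however, the paper relies on Lemma~\ref{1.L7} together with Lemma~\ref{1.L8}. The latter asserts that $\sqrt nR_n\xrightarrow{P}0$ alone implies uniform continuity in probability. Your remark that pointwise $o_P(1)$ is not enough is correct, and that lemma cannot hold in the stated generality. Take independent $R_n$ with $P(R_n=1)=1/\log n=1-P(R_n=0)$: they satisfy $\sqrt nR_n\xrightarrow{P}0$, yet some $R_{n'}$ with $n<n'\le(1+\eta)n$ equals $1$ with probability tending to one. So quoting the paper would not close your gap either.

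What is missing, in both versions, is a real maximal bound for the nonlinear remainder of the component statistic. One workable route is to use \eqref{1.3} directly.
\begin{itemize}
\item Up to an $O_P(n^{-1})$ term, $\int_0^1\widehat\alpha_n(p)\,dp$ is the weighted V-statistic $\sum w\,x\,(1-\widehat F_n(x))$ over independent clusters.
\item After Hoeffding decomposition, its degenerate part, summed over pairs of clusters, is a martingale in the number of sampled clusters. Doob's inequality then bounds its block maximum by $O_P(n^{-1/2})$ on the $\sqrt n$ scale.
\item Kolmogorov's inequality handles the linear part exactly as in your proposal.
\item This also removes the uniformity-in-$p$ issue.
\end{itemize}
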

\textbf{Proof. } The proof of the theorem is given in the Appendix.
\section{Simulation}\label{1.S5}
In this section, we conduct a simulation study under a complex household survey design to illustrate the performance of both the purely sequential and the two-stage procedures in constructing a $100(1-\alpha)\%$ confidence interval for the Gini index, with the constraint that the interval width does not exceed a pre-specified bound $\omega$.

To illustrate the effectiveness of these procedures, we compare the outcomes such as the achieved sample size and coverage probability with their corresponding values under the ideal scenario where population characteristics are assumed to be known.

For simulation study, construction of pseudo population is given in Section 1 of supplementary material.
\subsection{Results for Purely Sequential Procedure}
In this subsection, we carry out a simulation study to explore the properties of our purely sequential procedure. After collecting the pilot sample from the pseudo population, the estimator $V_m^2$ of $\xi^2$ is computed and checked the stopping rule \eqref{1.5}. If satisfied, our pilot cluster size ($m$) is our final cluster size. If not, then we choose $m'\;(=1)$ additional cluster(s) from each stratum $s$ with $n_s<\widehat{C}_s$. We repeat this process until the stopping rule is met. Based on final cluster size $N$, we estimate $\xi^2$ using $V_N^2$, population Gini index $G_X$ using $\widehat{G}_N$, respectively, and construct $100(1-\alpha)\%$ confidence interval for the Gini index $G_X$ as given in \eqref{1.6}.

This procedure is repeated 5000 times. For each replication, we estimate $\xi^2$, the Gini index $G_X$, and construct the corresponding confidence interval. The empirical results for the purely sequential procedure are summarized in Table \ref{1.T5} for the setting $\alpha=0.05, \omega=0.015$.
 

The first column in Table~\ref{1.T5} lists the distribution of the household's monthly income. The second column reports the average final number of clusters ($\bar{N}$) along with the standard deviation ($sd(N)$). The third column presents the average estimated Gini index ($\bar{\widehat{G}}_N$) based on the final sample size. The fourth column shows the ratio of the average final number of clusters to the optimal number of clusters ($\bar{N}/C$). The fifth column displays the average ratio of the estimated variance to the true variance ($\bar{V}^2_N/\xi^2$). The sixth column provides the coverage probability ($\widehat{p}$) of the $5000$ constructed confidence intervals along with its standard error ($se(\widehat{p})$). Column seven gives the average length of these intervals ($\bar{w}_N$) and its standard deviation ($sd({w}_N)$). The final column reports the percentage of intervals whose widths exceed the pre-specified bound $\omega$, denoted by $\widehat{P}(w_N>\omega)$ along with its standard error ($se(\widehat{P}(w_N>\omega))$).

\begin{table}[hbt!]
\caption{Results for $\alpha=0.05$, $\omega=0.015$, and $\delta=2$.}
\centering
\begin{tabular}{ccccccccc}
		\toprule
		Distribution  & $\overline{N}$
		& $\overline{\widehat{G}}_N$  &$\frac{\overline{N}}{C}$&$ \frac{\overline{V_N^2}}{\xi^2}$&$\widehat{p} $&$\overline{w}_N$ & $\widehat{P}(w_N>\omega)$\\
        & $(sd({N}))$&   &&&$(se(\widehat{p}))$&$(sd({w_N}))$& $se(\widehat{P}(w_N>\omega))$\\
		\midrule
		 Gamma &888.176&0.3311&0.9923&0.9909&0.9486&0.01498 &0\\
         (2.649,0.84)&(59.25999)&&&&(0.0031)&(0.000009)&(0)\\
         Pareto&442.15&0.1109&0.9367&0.9338&0.9384&0.0149&0\\
        (20000,5)&(119.2319)&&&&(0.0034)&(0.00002)&(0)\\
		lognormal&1008.81&0.3090&0.9910&0.9897&0.9514&0.0150&0\\
	(2.185,0.562)&(109.1879)&&&&(0.0030)&(0.000009)&(0)\\
	
		\bottomrule
	\end{tabular}
    \label{1.T5}
\end{table}

The results in Table \ref{1.T5} demonstrate that the average widths of the confidence intervals are smaller than the pre-assigned thresholds under the purely sequential procedure. The last column further confirms that, in each replication, the interval width is consistently below the specified bound. The estimated coverage probabilities are close to the nominal level of $100(1-\alpha)\%$. Additionally, the ratio of the average final number of clusters to the optimal number of clusters is approximately 1. These findings collectively support the theoretical properties of the purely sequential procedure discussed in Section \ref{1.S4}.
{Further, simulation results corresponding to another setting ($\alpha=0.05$, $\omega=0.01$) is given in Section 2 of supplementary material.}
\subsection{Results for Two Stage Procedure}
In this subsection, we carry out a simulation study to explore the properties of our two-stage procedure. After collecting the pilot sample from the pseudo population, the estimator $V_t^2$ of $\xi^2$ is computed and the value of the final cluster size is computed using the stopping rule \eqref{1.7}. If $Q=t$, our pilot cluster size $(t)$ is our final cluster size. If $Q>t$, then at the second stage, we have to sample $Q_s-t_s$ number of clusters from each stratum $s$. Based on final cluster size $Q$, we estimate $\xi^2$ using $V_Q^2$, population Gini index $G_X$ using $\widehat{G}_Q$, respectively, and construct $100(1-\alpha)\%$ confidence interval for the Gini index $G_X$ as given in \eqref{1.8}.

This procedure is repeated 5000 times. For each replication, we estimate $\xi^2$, population Gini index $G_X$, and construct the corresponding confidence interval. The empirical results for the two-stage procedure are summarized in Table \ref{1.T7} for the setting $\alpha=0.05, \omega=0.015$.

The first column in Table \ref{1.T7} lists the distribution of the household's monthly income. The second column reports the average final number of clusters($\bar{Q}$) along with the standard deviation $(sd({Q}))$. The third column presents the average estimated Gini index ($\bar{\widehat{G}}_Q$) based on the final sample size. The fourth column shows the ratio of the average final number of clusters to the optimal number of clusters ($\bar{Q}/C$). The fifth column displays the average ratio of the estimated variance to the true variance ($\bar{V}^2_Q/\xi^2$). The sixth column provides the coverage probability ($\widehat{p}$) of the 5000 constructed confidence intervals along with its standard error ($se(\widehat{p}$)). Column seven gives the average length of these intervals ($\bar{w}_Q$) and its standard deviation $(sd({w_Q}))$. The final column reports the percentage of  intervals whose widths exceed the pre-assigned bound $\omega$, denoted by $\widehat{P}(w_Q>\omega)$ along with its standard error ($se(\widehat{P}(w_Q>\omega))$).

\begin{table}[hbt!]
\caption{Results for $\alpha=0.05$, $\omega=0.015$, and $\delta=2$.}
\label{1.T7}
\centering
\resizebox{\textwidth}{!}{%
	\begin{tabular}{ccccccccccc}
   
		\toprule

		Distribution &$\overline{Q}$
		& $\overline{\widehat{G}}_Q$  &$\frac{\overline{Q}}{C}$&$ \frac{\overline{V_Q^2}}{\xi^2}$&$\widehat{p} $&$\overline{w}_Q$ & $\widehat{P}(w_Q>\omega)$\\
        
         & $(sd({Q}))$&   &&&$(se(\widehat{p}))$&$(sd({w_Q}))$& $se(\widehat{P}(w_Q>\omega))$\\
		\midrule
		 Gamma&874.62&0.3311&0.9772&0.9904&0.954&0.0156&0.586\\
          (2.649,0.84)&(270.03)&&&&(0.0029)&(0.0022)&(0.0070)\\
	Pareto&449.305&0.1111&0.9519&0.9514&0.936&0.0169&0.623\\
        (20000,5)&(382.101)&&&&(0.0035)&(0.0053) &(0.0068)\\

		lognormal&983.932&0.3089&0.9665&0.9930&0.946&0.0162 & 0.6394\\
        (2.185,0.562)&(515.263)&&&&(0.0032)&(0.0032)&(0.0068)\\
		\bottomrule
	\end{tabular}%
    }
\end{table}

The results in Table \ref{1.T7} demonstrate that the estimated coverage probabilities are close to the nominal level of $100(1-\alpha)\%$. Additionally, the ratio of the average final number of clusters to the optimal number of clusters is approximately 1. These findings collectively support the theoretical properties of the two-stage procedure discussed in Section \ref{1.S4}. The width of the confidence interval may exceed the pre-assigned bound $\omega$.

{For simulation results corresponding to another setting ($\alpha=0.05$, $\omega=0.01$), see Section 2 of supplementary material.}
\subsection{Discussion}
In this subsection, we discuss some points based on our simulation results for purely sequential as well as for two-stage procedures. We discuss the proportion of confidence intervals whose width exceeds the pre-assigned width $\omega$. Also, we  discuss the empirical distribution of the final sample sizes for both procedures.
\subsubsection{Width of the Confidence Intervals}
For purely sequential procedure, simulation results in Table \ref{1.T5} show that the width of the confidence intervals is less than the pre-assigned width $\omega$. This happens because, by stopping rule \eqref{1.5} of the purely sequential procedure, we have
\[N\geq\frac{4z_{{\alpha/2}}^2}{\omega^2}\left(V_N^2+\frac{1}{N^\delta}\right)\geq\frac{4z_{{\alpha/2}}^2}{\omega^2}V_N^2.\]
This implies
\[\frac{2z_{\alpha/2}V_N}{\sqrt{N}}\leq\omega.\] 
Thus, the final cluster size obtained using a purely sequential procedure guarantees the bounded-width $100(1-\alpha)\%$ confidence interval.

The simulation results for the two-stage procedure, reported in Table \ref{1.T7}, indicate that the width of the confidence interval exceeds the pre-assigned threshold $\omega$. In this procedure, data are collected in two stages. Due to the limited information about $\xi^2$ at the first stage, the estimator $V_t^2$ exhibits high variability, which in turn increases the variability of the final estimator $V_Q^2$. As a result, a considerable proportion of the constructed confidence intervals exceed the desired width. To further investigate this behavior, we examine the proportion of intervals exceeding the specified width under varying pilot cluster sizes. Specifically, we conduct simulations for different values of $\delta$ (i.e., different pilot sample sizes) across the specified income distributions. The corresponding results are summarized in Tables \ref{1.T9}–\ref{1.T11}.
\begin{table}[ht]
\centering
\begin{minipage}{0.45\linewidth}
\centering
\caption{Results for Two Stage procedure for Gamma(2.649, 0.84).}
\label{1.T9}
\resizebox{\textwidth}{!}{%
\begin{tabular}{cccc}
\hline
\toprule
		&$\alpha=0.05$& $\omega=0.015$&\\
         \hline
		$\delta$& Pilot&$\frac{\overline{Q}}{C}$&$\widehat{P}(w_Q>\omega)$ \\
        &Cluster Size&&$se(\widehat{P}(w_Q>\omega))$\\
		\midrule
		2&42&0.9772&0.586 \\
        &&&(0.0070)\\

		1&262&0.9953&0.513 \\
        &&&(0.0071)\\

		0.8&486&0.9970&0.4742 \\
        &&&(0.0071)\\

		0.65&852&1.007&0.255\\
        &&& (0.0062)\\
		\bottomrule
\end{tabular}%
}
\end{minipage}
\hfill
\begin{minipage}{0.45\linewidth}
\centering
\caption{Results for Two Stage procedure for Pareto(20000, 5).}
\label{1.T10}
\resizebox{\textwidth}{!}{%
\begin{tabular}{cccc}
\toprule
		 &$\alpha=0.05$& $\omega=0.015$&\\
         \hline
		$\delta$& Pilot&$\frac{\overline{Q}}{C}$&$\widehat{P}(w_Q>\omega)$ \\
       &Cluster Size&&$se(\widehat{P}(w_Q>\omega))$\\
		\midrule
		2&42&0.9519&0.623\\
		    &&& (0.0068)\\

		1.5&86&0.9869&0.5602\\
		    &&& (0.0070)\\

		1&262&1.0075&0.3632\\
		    &&& (0.0068)\\
	
		0.85&412&1.0541&0.1282 \\
            &&&(0.0047)\\
		\bottomrule
\end{tabular}%
}
\end{minipage}
\end{table}
\begin{table}[hbt!]
 \centering
 \caption{Results for Two Stage procedure for lognormal(2.185, 0.562).}
 \label{1.T11}
        \begin{tabular}{cccc}
		\toprule
         &$\alpha=0.05$& $\omega=0.015$&\\
         \hline
		$\delta$& Pilot&$\frac{\overline{Q}}{C}$&$\widehat{P}(w_Q>\omega)$ \\
       &Cluster Size&&$se(\widehat{P}(w_Q>\omega))$\\
		\midrule
		2&42&0.9665&0.6394\\
		&&& (0.0068)\\

		1&262&0.9952&0.5326\\
		&&& (0.0071)\\
	
		0.8&486&1.0009&0.4754\\
		&&& (0.0071)\\
	
		0.62&966&1.0216&0.1702 \\
        &&&(0.0047)\\
        \bottomrule
	\end{tabular}
  \end{table}
The results in Tables \ref{1.T9}–\ref{1.T11} show that, as the pilot sample size increases, the proportion of confidence intervals exceeding the pre-assigned width $\omega$ decreases. This is because larger pilot cluster sizes lead to reduced variability in the estimator $V_t^2$. Notably, the results reveal a substantial drop in the proportion of wider intervals beyond a certain pilot sample size. 
However, in practical survey applications, selecting a large pilot cluster size may not be desirable due to cost and logistical constraints.
 
From these observations, we conclude that the purely sequential procedure consistently yields confidence intervals whose widths are below the pre-assigned threshold $\omega$. In contrast, under the two-stage procedure, the confidence interval width may exceed $\omega$, particularly when the pilot sample size is small.

{A similar discussion for $\alpha=0.01$ is provided in Section 2 of supplementary material. Further, for empirical distribution, one may look at histogram provided in Section 3 of supplementary material.}
\section{Application results for 64th round NSS Data}\label{1.S6}
In this section, we illustrate the purely sequential and two-stage procedures using survey data from the $64$-th round of the National Sample Survey (NSS) for four Indian states: Maharashtra, Uttar Pradesh, West Bengal, and Tamil Nadu. 

Tables \ref{1.T12} and \ref{1.T13} summarize the results for the purely sequential and two-stage procedures, respectively. Each table lists the state (first column) and the total number of clusters ($H$) in the survey data (second column). The third column shows the final number of clusters - $N$ for the purely sequential procedure in Table \ref{1.T12} and $Q$ for the two-stage procedure  in Table \ref{1.T13} along with the corresponding pilot sample size. The fourth column presents the respective Gini index estimate and its standard error. The fifth and sixth columns report the lower and upper confidence limits, and the last column gives the width of the confidence interval.


\begin{table}[hbt!]
\caption{Results for Purely Sequential Procedure $\alpha=0.1$, $\omega=0.02$ and $\delta=1$.}
\label{1.T12}
\centering

	\begin{tabular}{ccccccc}
		\toprule
		
		State& H & N (m)& $\widehat{G}_N$(se($\widehat{G}_N$)) &Lower CI &Upper CI &$w_N$\\
		\midrule
		Maharashtra& 1008 &1004(166) & 0.2913(0.0060)&0.2815 &0.3012 &0.0197\\
		
		Uttar Pradesh& 1262 &653(166) & 0.2653(0.0058
		)&0.2557 & 0.2748&0.0192\\
		
		West Bengal& 878 & 540(166)&0.2806(0.0058) & 0.2711& 0.2901&0.0190\\
		
		Tamil Nadu& 709 & 512(166)&0.2652(0.0057) & 0.2557
		& 0.2746&0.0189\\
		\bottomrule
	\end{tabular}
\end{table}

  
\begin{table}[hbt!]
\caption{Results for Two-Stage Procedure $\alpha=0.1$, $\omega=0.02$ and $\delta=1$.}
\label{1.T13}
\centering

	\begin{tabular}{ccccccc}
		\toprule
		State& H & Q(t)& $\widehat{G}_Q$(se($\widehat{G}_Q$)) &Lower CI &Upper CI &$w_Q$\\
		\midrule
		Maharashtra& 1008 & 822(166)&0.2911 (0.0055) & 0.2821&0.3002 &0.0181\\
	
		Uttar Pradesh& 1262 & 854(166) &0.2612 (0.0051) & 0.2529& 0.2696& 0.0168\\
		West Bengal& 878 & 586(166)&0.2751 (0.0053) &0.2664 &0.2839 & 0.0175\\
		
		Tamil Nadu& 709 & 566(166)&0.2677(0.0055) & 0.2587&0.2767&0.0180 \\
		\bottomrule
	\end{tabular}
\end{table}
The results in Tables~\ref{1.T12}–\ref{1.T13} show that both procedures achieve the desired precision (i.e., a narrow confidence interval with $w_N \leq \omega$ or $w_Q \leq \omega$) for the Gini index by selecting relatively few clusters while still meeting the required confidence level. This is evident for Maharashtra, Uttar Pradesh, West Bengal, and Tamil Nadu. 

Table~\ref{1.T20} presents the results obtained from real survey data under fixed cluster size. The first column lists the states, while the second and third columns report the fixed cluster sizes $n_1$ and $n_2$. Here, $n_1$ corresponds to under-sampling and $n_2$ to oversampling, relative to the cluster size determined through the sequential procedure (see Tables~\ref{1.T12} and~\ref{1.T13}). The fourth and fifth columns show the widths of the confidence intervals associated with $n_1$ and $n_2$, respectively. As observed, under-sampling ($n_1$) leads to interval widths exceeding $0.02$, whereas oversampling ($n_2$) yields intervals narrower than $0.02$, but at a higher sampling cost. In comparison, the sequential procedure achieves the prescribed width with substantially better precision–cost balance, highlighting its practical advantage over fixed cluster size approaches.
\begin{table}[hbt!]
\caption{Real data results under fixed cluster sizes $n_1$ and $n_2$.}
\centering
\begin{tabular}{ccccc}
		\toprule
		State & $n_1$&$n_2$& $w_{n_1}$&$w_{n_2}$\\
        \midrule
        Maharashtra&800&1008&0.0228&0.0198\\
        Uttar Pradesh&500&1000&0.0232&0.0167\\
		 West Bengal&350&700&0.0261&0.0172\\
          Tamil Nadu&320&650&0.0224&0.0154\\
            \addlinespace
	\bottomrule
	\end{tabular}
    \label{1.T20}
\end{table}

\section{Discussion on Sub-Stratum Effect}\label{1.S7}
In this section, we examine effect of introducing sub-stratification on the final number of clusters, as well as the precision of the Gini index estimation. As discussed earlier, the complex household survey design used by \cite{DARKU} involves stratifying the population and forming clusters of households within each stratum. Their design does not incorporate sub-strata within the selected clusters, an approach similar to that adopted by several survey agencies around the world to account for further heterogeneity within clusters. Also, we note that our proposed design introduces an additional layer of sub-stratification by dividing each selected cluster into two sub-strata: affluent and non-affluent households. To investigate the influence of this additional sub-stratification, we simulate data from three income distributions representative of typical household income patterns: Gamma(2.649, 0.84), Pareto(20000, 5), and Lognormal(2.185, 0.562). For each distribution, we evaluate the final number of clusters determined by the procedures outlined in \cite{DARKU}) and compare them with those obtained using our approach via equations \eqref{1.5} and \eqref{1.7}. To ensure a fair comparison, we use the additional term as $1/n^2$ for the sequential procedure proposed by  \cite{DARKU} instead of $1/n$. For clarity of presentation, we refer to our sampling design as “Proposed” and to the design of \cite{DARKU} as “Darku et al.” in Tables \ref{1.T16}-\ref{1.T17}.

Table \ref{1.T16} summarizes the final number of clusters by the purely sequential and the two-stage procedures based on $5000$ simulations. The first column lists the underlying income distributions. The second column presents the average number of clusters under the purely sequential ($\bar{N}$) and two-stage ($\bar{Q}$) procedures using our proposed sub-stratified approach. These are computed via equations \eqref{1.5} and \eqref{1.7}, respectively. The third column report the corresponding averages obtained using the methods of \cite{DARKU}.
\begin{table}[ht]
\centering
\begin{minipage}{0.45\linewidth}
\centering
\caption{Final number of cluster for $\alpha=0.05$, $\omega=0.015$.}
\label{1.T16}
\begin{tabular}{ccc}
		\toprule
		Distribution & $\bar{N}\left(\bar{Q}\right)$ &$\bar{N}\left(\bar{Q}\right)$\\
        &Proposed&Darku et al.\\
        \midrule
		 Gamma&888.176&948.344\\
 
         (2.649,0.84)&(874.62)&(936.8008)\\

        \midrule
		Pareto&442.15&468.7324\\

        (20000,5)&(449.305)&(489.8396)\\

        \midrule
		lognormal&1008.81&1077.854\\

	(2.185,0.562)&(983.932)&(1055.457)\\
	\bottomrule
	\end{tabular}
\end{minipage}
\hfill
\begin{minipage}{0.45\linewidth}
\centering
\caption{Estimate of Gini index and asymptotic variance using 1200 clusters.}
\label{1.T17}
 \resizebox{\textwidth}{!}{%
    \begin{tabular}{ccc}
		\toprule
		Distribution & $\overline{\widehat{G}}_n\left(\overline{V_n^2}\right)$ &$\overline{\widehat{G}}_n\left(\overline{V_n^2}\right)$\\
        &Proposed&Darku et al.\\
        \midrule
		 Gamma&0.331149&0.3311575\\
       
         (2.649,0.84)&(0.01307985)&(0.01393041)\\

        \midrule
		Pareto&0.1112591&0.1112986\\
     
        (20000,5)&(0.00692779)&(0.007426569)\\
	
        \midrule
		lognormal&0.3090566&0.30910412\\
      
	(2.185,0.562)&(0.01489573)&(0.01600393)\\
	\bottomrule
	\end{tabular}%
    }
\end{minipage}
\end{table}
In Table \ref{1.T17}, we compare the performance of both procedures under a fixed number of clusters, set at 1200. For each income distribution, we conducted 5000 simulation runs under this setting. The results are summarized. The first column lists the household monthly income distributions used in the simulations. The second column present the average estimated Gini index and its estimated variance using equations~(\ref{1.3}) and (\ref{V.E.})–(\ref{std2.3}).
The results presented in Table \ref{1.T17} demonstrate that, on average, the Gini index estimates from the proposed design using sub-stratification are closer to the true values than those obtained using the design used in \cite{DARKU}. Additionally, the proposed design yields consistently lower estimates of the asymptotic variance, indicating that accounting for sub-stratification improves the precision of the Gini estimator. Together, Tables \ref{1.T16}–\ref{1.T17} underscore the benefits of incorporating sub-stratification—yielding more accurate estimates with reduced variability and lower sampling costs compared to designs that do not account for this structure.
\section{Concluding Remarks}\label{1.S8}
The Gini index serves as an important indicator of economic inequality and plays a vital role in informing policy decisions. This article proposes constructing a bounded-width confidence interval for the Gini index using a complex household survey design that stratifies the population, divides each stratum into clusters, and further classifies households within clusters into affluent and non-affluent groups.

In this article, we develop two procedures using this design - a purely sequential procedure and a two-stage procedure, to determine the optimal cluster size needed to construct a bounded-width confidence interval for the Gini index without assuming any specific income distribution. Unlike existing approaches, our design accounts for sub-stratification within clusters, aligning with real survey practices. We show that, under mild regularity conditions, both methods ensure that when the desired width is sufficiently narrow, the final sample cluster size is close to optimal and the confidence interval achieves the required coverage probability. Simulation results confirm these asymptotic properties. The purely sequential procedure yields smaller standard errors in sample sizes and is therefore preferred, while the two-stage procedure offers a simpler implementation in settings where only limited sampling stages are feasible due to logistical or cost considerations.
\section{Appendix}
\renewcommand{\theequation}{A.\arabic{equation}} 
{Below we give a list of lemmas whose proofs are given in Section 4 of the supplementary material.}
\begin{lemma}\label{1.L1} If $\xi^2$ is the asymptotic variance as defined in section \ref{1.S2} and $V_t^2$ is its estimator, then $E(V_t^2)\rightarrow\xi^2$ as $t\rightarrow\infty$.
\end{lemma}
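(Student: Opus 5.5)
The natural route to Lemma~\ref{1.L1} is convergence in probability plus uniform integrability. We first show $V_t^2 \xrightarrow{P} \xi^2$ as $t\to\infty$, and then show that $\{V_t^2\}$ is uniformly integrable. Together these give $E(V_t^2)\to\xi^2$ by Vitali's convergence theorem.

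For the first step, we treat $V_t^2$ as the Binder (linearization) variance estimator \eqref{V.E.}. Write $u_{sc_s}=\tilde u_{sc_s}+r_{sc_s}$, where $\tilde u_{sc_s}$ is the cluster total of the population influence function of the Gini functional. This is obtained by replacing $\widehat\mu_t,\widehat F_t,\widehat G_t$ in \eqref{std2.3} by $\mu,F,G_X$, so $\tilde u_{sc_s}$ is a fixed function of the data in cluster $c_s$ only. The pieces then go as follows.
\begin{enumerate}
\item By \ref{1.AN9} and \ref{1.AN11}, the $\tilde u_{sc_s}$ are i.i.d.\ within stratum $s$ and independent across strata.
\item Since $t_s/t=a_s$, we can write $t\sum_s\frac{t_s}{t_s-1}\sum_{c_s}(\tilde u_{sc_s}-\bar{\tilde u}_s)^2=\sum_s \frac{t}{t_s-1}\,t_s^{2}\,\widehat{\mathrm{Var}}_s(\tilde u)$. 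Because the weights carry the factor $1/n_s$, each $\tilde u_{sc_s}$ is of order $1/t$, so this expression is $\sum_s a_s^{-1}\,\mathrm{Var}_s(t\,\tilde u_{s1})+o_P(1)$ by the weak law of large numbers using \ref{1.AN12}.
\item By \cite{BD2007}, the limit in the previous step is exactly $\xi^2$.
\item The remainder $r_{sc_s}$ is controlled using the consistency of $\widehat\mu_t$ and $\widehat G_t$ (assumptions \ref{1.AN1}--\ref{1.AN4}) and the uniform convergence $\sup_x|\widehat F_t(x)-F(x)|\to0$ in probability, a Glivenko--Cantelli argument using \ref{1.AN14}--\ref{1.AN15}. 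Then Cauchy--Schwarz gives $t\sum_s\sum_{c_s} r_{sc_s}^2=o_P(1)$, and the cross terms vanish as well.
\end{enumerate}

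For the second step, uniform integrability, we bound $V_t^2$ by quantities with uniformly bounded $(1+\delta/2)$-th moments. We have
\[
V_t^2\le t\sum_s\tfrac{t_s}{t_s-1}\sum_{c_s}u_{sc_s}^2 .
\]
Each $|u_{sc_s}|$ is bounded by
\[
\frac{C}{\widehat\mu_t}\bigl(\text{weighted cluster total of }x\bigr)+\frac{C}{\widehat\mu_t}\bigl(\text{weights}\bigr)\widehat\mu_t ,
\]
since $|\widehat F|\le1$, $|\widehat G_t|\le1$, and the inner sum is at most $\widehat\mu_t$. The factor $1/\widehat\mu_t$ in the first term is the nuisance. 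On the event $\{\widehat\mu_t>\mu/2\}$, $V_t^2$ is dominated by a constant times the average of squared scaled cluster totals, and \ref{1.AN13} makes that family uniformly integrable (de la Vallée-Poussin with exponent $1+\delta/2$). On the complementary event we argue differently: the ratio of weighted sums with nonnegative incomes is bounded by the maximum of the per-cluster ratios, which are bounded by the cluster size. The event itself has probability tending to zero, at a rate obtainable from Chebyshev via \ref{1.AN12}.

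The main obstacle is this last uniform-integrability step, specifically controlling the $1/\widehat\mu_t$ factor and the squared-total structure with only $2+\delta$ moments. Squared cluster totals then have only $1+\delta/2$ moments, which suffices for uniform integrability but is tight. I would try first to exploit the nonnegativity of $x$ together with the bounds $|\widehat F|\le1$ and $|\widehat G|\le1$, as above. If that fails, I would fall back on the paper's standing hypothesis in Theorem~\ref{1.TH1} that $E[V_n^2]$ exists, strengthened to uniform boundedness of a slightly higher moment, and cite \cite{BINDER} for consistency.
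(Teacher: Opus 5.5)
Your architecture is sound: consistency plus uniform integrability, with a split on $G_t=\{\widehat\mu_t>\mu/2\}$. The half on $G_t$ also works. There $V_t^2$ is dominated by a constant plus a multiple of averages of squared cluster means, whose $(1+\delta/2)$-th moments are bounded by Jensen and \ref{1.AN13}.

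The gap is on $G_t^c$. $\widehat F_t$ and $\widehat G_t$ are scale-free, and the bracket in \eqref{std2.3} is divided by $\widehat\mu_t$. So $u_{sc_s}$, and hence $V_t^2$, is invariant under rescaling all incomes. The event $G_t^c$ therefore says nothing about the shape of the sample, and any bound you use there must hold for every configuration of nonnegative incomes.

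The best such bound is of order $t$. Let $T_{sc_s}\ge 0$ be the weighted cluster totals. Then $|u_{sc_s}|\le 2T_{sc_s}/\widehat\mu_t+O(1/t)$ and $\sum_{s,c_s}T_{sc_s}^2\le\widehat\mu_t^2$, which give $V_t^2\le Ct$. This order is attained. Suppose one sampled cluster carries half of $\widehat\mu_t$ and the remaining incomes are distinct and roughly uniformly spread. Then the coefficient $\widehat F_t-(\widehat G_t+1)/2$ at that cluster's incomes stays bounded away from zero. So that cluster's $u_{sc_s}$ is of order one, the others are $O(1/t)$, and $V_t^2\asymp t$. If ``bounded by the cluster size'' was meant as a bound free of $t$, it is false.

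Chebyshev with \ref{1.AN12} only yields $P(G_t^c)\le 4\,\mathrm{Var}(\widehat\mu_t)/\mu^2=O(1/t)$. Your estimate is then $E(V_t^2\mathbf 1_{G_t^c})\le Ct\cdot O(1/t)=O(1)$, which does not tend to zero. As written, $E(V_t^2)\to\xi^2$ does not follow.

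What you need is $P(\widehat\mu_t\le\mu/2)=o(1/t)$, and your hypotheses supply it in either of two ways.
\begin{itemize}
\item Apply the one-sided inequality for independent nonnegative summands, $P\bigl(\sum_iX_i\le\sum_iEX_i-a\bigr)\le\exp\{-a^2/(2\sum_iEX_i^2)\}$, which follows from $e^{-y}\le 1-y+y^2/2$ for $y\ge 0$. Use it on the $T_{sc_s}$: they are independent by \ref{1.AN9}, have means summing to $\mu$, and have second moments summing to $O(1/t)$. This gives $P(G_t^c)\le e^{-ct}$.
\item Combine the $(2+\delta)$-th moments in \ref{1.AN13} with a Marcinkiewicz--Zygmund (Rosenthal) bound, $E|\widehat\mu_t-\mu|^{2+\delta}=O(t^{-1-\delta/2})$. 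Then $t\,P(G_t^c)=O(t^{-\delta/2})$.
\end{itemize}
Either way, $E(V_t^2\mathbf 1_{G_t})\to\xi^2$ by your Vitali argument and $E(V_t^2\mathbf 1_{G_t^c})\to 0$. Your fallback of strengthening the moment hypothesis on $V_n^2$ would prove a different statement, and it is not needed.

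Smaller points:
\begin{itemize}
\item Since $t\,t_s=a_st^2$, the limit in your item 2 is $\sum_s a_s\mathrm{Var}(t\tilde u_{s1})$, not the version with $a_s^{-1}$. This is harmless, since you identify the limit with $\xi^2$ via \cite{BD2007}.
\item The remainder step also needs uniform convergence of the weighted tail sums $\sum w\,x\,\mathbf 1(x\ge y)$ appearing in \eqref{std2.3}, not only of $\widehat F_t$. Both follow from the pointwise law of large numbers plus monotonicity, not from \ref{1.AN14}--\ref{1.AN15}.
\item The consistency step can simply be quoted from Lemma~\ref{1.L2}.
\end{itemize}
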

\begin{lemma}\label{1.L2} If $\xi^2$ is the asymptotic variance as defined in section \ref{1.S2} and $V_n^2$ is its estimator, then $V_n^2\xrightarrow{a.s.}\xi^2$ as $n\rightarrow\infty$.
\end{lemma}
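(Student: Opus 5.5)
Lemma~\ref{1.L2} asserts the strong consistency $V_n^2\xrightarrow{a.s.}\xi^2$. The plan is to write $V_n^2$ as a smooth function of a finite collection of design-weighted averages over the sampled clusters, and then combine two ingredients:
\begin{itemize}
\item a strong law of large numbers for independent, non-identically distributed cluster-level summaries, applied stratum by stratum;
\item uniform almost sure convergence of the empirical quantities $\widehat{\mu}_n$, $\widehat{F}_n$, $\widehat{G}_n$ and of the weighted tail sums $\sum w\,x\,\mathbf{1}(x\geq y)$.
\end{itemize}

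\textbf{Step 1: replace estimated quantities by population ones.} Define the oracle quantities $u^0_{sc_s}$ by putting $\mu$, $F$, $G_X$ and the population tail function $T(y)=E[X\mathbf{1}(X\geq y)]$ in place of $\widehat{\mu}$, $\widehat{F}$, $\widehat{G}_n$ and the weighted tail sum in \eqref{std2.3}.

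\emph{Estimated plug-ins.}
\begin{itemize}
\item By Assumption~\ref{1.AN9}, the cluster totals are independent across clusters.
\item By Assumptions~\ref{1.AN11}--\ref{1.AN13}, within a stratum they are identically distributed with bounded $2+\delta$ moments.
\item Kolmogorov's SLLN, with $n_s/n=a_s$ fixed, then gives $\widehat{\mu}_n\to\mu$ almost surely.
\item A Glivenko--Cantelli argument on a countable grid, using monotonicity of $\widehat{F}_n$ and continuity of $F$ (Assumption~\ref{1.AN14}), gives $\sup_x|\widehat{F}_n(x)-F(x)|\to0$ almost surely. The same argument, with dominated tails controlled by Assumption~\ref{1.AN2}/\ref{1.AN16}, gives uniform convergence of the tail sums to $T$.
\item $\widehat{G}_n\to G_X$ almost surely then follows from \eqref{1.3} by continuity.
\end{itemize}

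\emph{Difference between the two versions.} Each $u_{sc_s}$ is a weighted sum over the $2k$ households of one cluster. The difference $u_{sc_s}-u^0_{sc_s}$ is therefore bounded by $\epsilon_n\,R_{sc_s}$, where:
\begin{itemize}
\item $\epsilon_n\to0$ almost surely is the maximum of the uniform errors above;
\item $R_{sc_s}$ is a linear combination of the cluster total and cluster size times weights, which has a finite second moment.
\end{itemize}
Since $n\,w_{sc_sb h}=O(1)$ with normalization by $W/n\to$ const a.s., we get $n^{-1}\sum R_{sc_s}^2=O(1)$ almost surely. A Cauchy--Schwarz expansion of the square then shows that $V_n^2$ and its oracle version $V_n^{0\,2}$ differ by $o(1)$ almost surely.

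\textbf{Step 2: the oracle version.} The quantities $n u^0_{sc_s}$ are i.i.d. within stratum $s$, with finite variance by Assumptions~\ref{1.AN12}--\ref{1.AN13}, and independent across strata. We write
\[
V_n^{0\,2}=\sum_s \frac{n_s}{n_s-1}\,\frac{1}{n}\sum_{c_s}\bigl(nu^0_{sc_s}-n\bar u^0_s\bigr)^2 .
\]
By the SLLN applied to first and second moments in each stratum, this converges almost surely to
\[
\sum_s a_s^{-1}\,\mathrm{Var}\bigl(n u^0_{s1}\bigr)\cdot(\text{normalization}),
\]
and this limit is exactly the linearization (influence-function) variance of \cite{BINDER}, that is, $\xi^2$ as given by \cite{BD2007}. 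Identifying the limit with $\xi^2$ amounts to checking that $u^0$ is the influence function of $G$. This follows by differentiating the estimating-equation representation N2/N3, using the non-singular $K$ of Assumption~\ref{1.AN4} and the covariance $W_0$ of Assumption~\ref{1.AN7}.

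\textbf{Main obstacle.} The hardest step will be Step 1. The term $\widehat{F}_n(x_{sc_sbh})$ is evaluated at the sample points themselves, so $u_{sc_s}$ is not a sum of independent pieces. Uniform almost sure convergence must therefore be proved over the whole real line with unbounded $x$. My first attempt would be to truncate at $\pm M$, handle $|x|>M$ through the integrable envelope $d(\cdot)$ of Assumption~\ref{1.AN2}, and let $M\to\infty$ after $n\to\infty$. The a.s. version of Assumption~\ref{1.AN7} (the summability $\sum\mathrm{Var}/i^2<\infty$) is exactly what Kolmogorov's criterion needs for the non-identically distributed pieces.
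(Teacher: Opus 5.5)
Your argument is correct in outline, and I see no genuine gap. Replacing $\widehat{\mu}_n$, $\widehat{F}_n$, $\widehat{G}_n$ and the weighted tail sum by their population counterparts changes each $n u_{sc_s}$ by at most $\epsilon_n R_{sc_s}$, with $\epsilon_n\to0$ a.s. The oracle $V_n^{0\,2}$ is then a stratum-wise sample variance of i.i.d.\ cluster scores, and the ordinary strong law applies to it.

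The paper does not give its proof of this lemma in the main text; it is deferred to the supplement. The main text only supplies ingredients: Binder's linearization for the form of $V_n^2$, \cite{BD2007} for $\xi^2$, and the Kolmogorov-type summability condition \ref{1.AN7} on the pooled, non-identically distributed estimating functions $\tilde{\mathbf{m}}_i$. This suggests, though I cannot confirm it, a strong law along the pooled cluster sequence followed by continuous mapping. Your stratum-by-stratum organization is more elementary and makes the plug-in step explicit. Because \ref{1.AN11} makes the cluster scores i.i.d.\ within each stratum, the second moments from \ref{1.AN13} suffice and \ref{1.AN7} is not needed, so your closing remark about it can be dropped. Neither route avoids importing the identification of the limit with $\xi^2$ from linearization theory.

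Several details need repair.

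(i) $W$ is not random. Summing the design weights gives $W=\sum_{s=1}^{S}\sum_{c=1}^{H_s}M_{sc}$, the total number of households. Hence $n\,w_{sc_sb_{c_s}h}=a_s^{-1}\bigl(\sum_{c=1}^{H_s}M_{sc}\bigr)M_{sc_sb_{c_s}}/(W M_{sc_s}k)\le 1/(a_sk)$ deterministically. No claim about $W/n$ is needed, and in fact $W/n\to0$ rather than to a constant.

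(ii) Because $n^{-1}\sum_{c_s=1}^{n_s}=a_s\,n_s^{-1}\sum_{c_s=1}^{n_s}$, the limit in Step 2 is $\sum_s a_s\,\mathrm{Var}(nu^0_{s1})$, not $\sum_s a_s^{-1}\mathrm{Var}(nu^0_{s1})$.

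(iii) The identification with $\xi^2$ cannot be obtained by differentiating N2/N3. That system is written for one fixed $p$, and the sandwich built from $K$ and $W_0$ yields only the joint limit of $(\widehat{\mu}_n,\widehat{\alpha}_n(p),\widehat{z}_n(p))$ at that $p$. But $\widehat{G}_n$ depends on $\int_0^1\widehat{\alpha}_n(p)\,dp$, whose variance involves covariances between different values of $p$. Verify the influence function directly instead. Write $G_X=\frac{2}{\mu}\int xF(x)\,dF(x)-1$; its von Mises derivative at $y$ is $\frac{2}{\mu}\bigl[(F(y)-\frac{G_X+1}{2})y+T(y)-\frac{\mu}{2}(G_X+1)\bigr]$, which is exactly the bracket in \eqref{std2.3}. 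Then $\xi^2=\sum_s a_s\mathrm{Var}(nu^0_{s1})$ is just the statement that $\sqrt{n}(\widehat{G}_n-G_X)$ is asymptotically equivalent to $\sqrt{n}\sum_s\sum_{c_s}(u^0_{sc_s}-E u^0_{sc_s})$. That is the result of \cite{BD2007} which the paper itself invokes.

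(iv) Your ``main obstacle'' is milder than you expect. $\widehat{F}_n$ is bounded and monotone. For nonnegative incomes, $y\mapsto\sum w\,x\,\mathbf{1}(x\geq y)$ is monotone with end values $\widehat{\mu}_n$ and $0$. So P\'olya's finite-grid argument gives uniform a.s.\ convergence to the continuous limits $F$ and $T$ without any truncation.
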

\begin{lemma}\label{1.L3}$\widehat{\mu}_n=\left(\sum_{s=1}^{S}\sum_{c_s=1}^{n_s}\sum_{b_{c_s}=1}^{2}\sum_{h=1}^{k}w_{sc_sb_{c_s}h}x_{sc_sb_{c_s}h}\right)$ satisfies uniform continuity in probability condition.
\end{lemma}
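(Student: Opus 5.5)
We will verify Anscombe's uniform continuity in probability (u.c.i.p.) condition for $\widehat{\mu}_n$. Concretely, we show that for every $\varepsilon>0$ and $\eta>0$ there exist $\gamma>0$ and $n_0$ such that for all $n\ge n_0$,
\[
P\Bigl(\max_{|n'-n|\le \gamma n}|\widehat{\mu}_{n'}-\widehat{\mu}_n|>\varepsilon\Bigr)<\eta .
\]
Since $\widehat{\mu}_n\to\mu$ almost surely, it suffices to establish this in the equivalent, slightly stronger form in which the increment is scaled by $\sqrt{n}$. That scaled version is what the random central limit theorem for $\widehat{G}_N$ actually needs.

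The first step is a reduction. The weights are self-normalizing: the common factor $(\sum_{c_s}M_{sc_s})/(n_s k)$ in $W_{sc_sb_{c_s}h}$ cancels between numerator and denominator, stratum by stratum. Hence
\[
\widehat{\mu}_n=\frac{A_n}{B_n},\qquad A_n=\frac1n\sum_{i=1}^n T_i,\qquad B_n=\frac1n\sum_{i=1}^n U_i,
\]
where, for the $i$th sampled cluster in stratum $s_i$, we set
\[
T_i=\frac{1}{a_{s_i}}\frac{\sum_{c}M_{s_ic}}{M_{s_ii}k}\sum_{b}M_{s_iib}\sum_h x_{s_iibh},
\qquad
U_i=\frac{1}{a_{s_i}}\frac{\sum_{c}M_{s_ic}}{M_{s_ii}k}\sum_{b}M_{s_iib}\,k .
\]
These are exactly the cluster-level quantities appearing in N3 with $\tilde m^1$. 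By AN9--AN11 and the with-replacement PPS selection, the pairs $(T_i,U_i)$ are independent across clusters and identically distributed within each stratum. Because $n_s=a_s n$, both $A_n$ and $B_n$ decompose as finite weighted sums over strata of ordinary i.i.d. sample means.

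The second step proves u.c.i.p. for each stratum mean $\bar T_{s,n_s}$, using the maximal-inequality argument for i.i.d. means. Write
\[
\bar T_{s,n'}-\bar T_{s,n}=\frac{1}{n'}\sum_{j=n+1}^{n'}(T_j-ET)-\Bigl(1-\frac{n}{n'}\Bigr)(\bar T_{s,n}-ET).
\]
Kolmogorov's inequality, which uses the finite variance from AN12/AN13, bounds the probability that the maximum of the first term over $n'\in[n,(1+\gamma)n]$ exceeds $\varepsilon/2$ by $\gamma\sigma_s^2/(n\varepsilon^2)\cdot O(1)$. After $\sqrt{n}$ scaling this becomes $O(\gamma/\varepsilon^2)$, which can be made small by choosing $\gamma$ small. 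The second term is at most $\gamma\,|\bar T_{s,n}-ET|$, which is $O_p(n^{-1/2})$ by the CLT, so it too is small after scaling. The range $n'\in[(1-\gamma)n,n]$ is handled symmetrically. The same argument applies to $\bar U_{s,n}$. A finite union bound over the $S$ strata then gives u.c.i.p. for $A_n$ and $B_n$.

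The third step transfers the property to the ratio. We use
\[
\frac{A_{n'}}{B_{n'}}-\frac{A_n}{B_n}=\frac{(A_{n'}-A_n)B_n-A_n(B_{n'}-B_n)}{B_{n'}B_n}.
\]
The strong law gives $B_n\to EU>0$ and $A_n\to ET$ almost surely, so with probability at least $1-\eta/2$ both $\min_{|n'-n|\le\gamma n}B_{n'}$ and $|A_n|$ are bounded away from $0$ and $\infty$ for large $n$. On that event the increment of $\widehat{\mu}$ is at most a constant times the increments of $A$ and $B$, and the conclusion follows from the second step.

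The main obstacle is the second step's bookkeeping when the procedure samples unevenly across strata. If $n_s=\lceil a_s n\rceil$ only approximately, the index sets do not nest exactly. I would handle this by noting that $|n_s'-n_s|\le\gamma' n_s$ whenever $|n'-n|\le\gamma n$, with $\gamma'=\gamma+O(1/n)$, so the stratum-wise maximal bound still applies.
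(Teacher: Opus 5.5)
Your argument is correct and is most likely the paper's own route; the paper defers its proof to the supplementary material, which I could not check, but the standard verification is exactly what you do. You express $\widehat{\mu}_n$ through stratum-wise averages of independent cluster-level totals and verify Anscombe's condition in its $\sqrt{n}$-scaled form (the form needed for the term $A$ in the proof of Theorem~\ref{1.TH2}), using Kolmogorov's maximal inequality for the new increments and a Chebyshev/CLT bound for the drift term $(1-n/n')(\bar T_{s,n}-ET)$.

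There are two small corrections. First, $\sum_{b_{c_s}=1}^{2}\sum_{h=1}^{k}W_{sc_sb_{c_s}h}=n_s^{-1}\sum_{c=1}^{H_s}M_{sc}$, so the normalizer $W=\sum_{s=1}^{S}\sum_{c=1}^{H_s}M_{sc}$ is deterministic. Your $B_n$ is therefore constant and the ratio step is superfluous; nothing actually cancels stratum by stratum. Second, the scaled condition is strictly stronger than the unscaled one, not equivalent to it, and the unscaled one follows immediately from $\widehat{\mu}_n\to\mu$ almost surely.
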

\begin{lemma}\label{1.L5}For each $p\in[0,1]$, $\sqrt{n}(F(z(p))-\widehat{F}_n(\widehat{z}_n(p)))\rightarrow0 \text{ as } n\rightarrow\infty.$
\end{lemma}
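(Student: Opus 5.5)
We need to prove Lemma \ref{1.L5}: for each $p\in[0,1]$, $\sqrt{n}\bigl(F(z(p))-\widehat{F}_n(\widehat{z}_n(p))\bigr)\rightarrow 0$. The plan is to use a Bahadur-type argument. The quantity $F(z(p))$ equals $p$ by continuity of $F$ near $z(p)$ (assumption \ref{1.AN14}). Meanwhile $\widehat{F}_n(\widehat{z}_n(p))$ is, by the definition of $\widehat{z}_n(p)$ as the smallest sample value with $\widehat{F}_n\geq p$, equal to $p$ plus the overshoot. So the statement reduces to showing $\sqrt{n}\bigl(\widehat{F}_n(\widehat{z}_n(p))-p\bigr)\to 0$ (in probability; the same bound gives almost sure convergence if the jump sizes are controlled almost surely).

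\textbf{Step 1 (sandwich).} By the definition of the weighted sample quantile,
\[
\widehat{F}_n(\widehat{z}_n(p)^-)<p\leq \widehat{F}_n(\widehat{z}_n(p)),
\]
so
\[
0\leq \widehat{F}_n(\widehat{z}_n(p))-p\leq \widehat{F}_n(\widehat{z}_n(p))-\widehat{F}_n(\widehat{z}_n(p)^-).
\]
The right-hand side is the total normalized weight of sampled households whose value equals $\widehat{z}_n(p)$. Since $F$ is continuous (has a density) on $J$ by \ref{1.AN14}, ties occur with probability zero within a cluster draw across distinct households. The jump is then a single weight $w_{sc_sb_{c_s}h}$, except for repetitions of the same cluster caused by with-replacement PPS sampling; these contribute at most a bounded number of copies with high probability.

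\textbf{Step 2 (bounding the maximal weight).} Write
\[
w_{sc_sb_{c_s}h}=\frac{W_{sc_sb_{c_s}h}}{W},
\]
where
\[
W_{sc_sb_{c_s}h}=\frac{\bigl(\sum_{c}M_{sc}\bigr)M_{sc_sb_{c_s}}}{n_s M_{sc_s}k}\leq \frac{\sum_c M_{sc}}{n_s k}.
\]
The total $W$ is of order a constant: it is an unbiased estimator of the population size, and by the strong law under \ref{1.AN11}--\ref{1.AN13} it converges to the population size. Since $n_s=a_s n$, every normalized weight is $O(1/n)$ uniformly. The jump is therefore $O_p(1/n)$ times the multiplicity of repeated clusters. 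Multiplying by $\sqrt{n}$ gives $O_p(n^{-1/2})\to 0$.

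\textbf{Step 3 (multiplicity).} The main obstacle is controlling the multiplicity of repeated clusters under with-replacement sampling. Duplicated draws of a cluster reproduce identical values, enlarging the jump. I would first handle this by the assumption that $H_s$ is large relative to $n_s$, so repeats are rare. Alternatively, note that even a multiplicity of order $o(\sqrt{n})$ suffices. The maximum multiplicity among $n_s$ PPS draws from $H_s$ clusters is $O_p(\log n)$ when $H_s\gtrsim n_s$, which still yields $\sqrt{n}\cdot O_p(\log n/n)\to 0$.

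If a cleaner route is preferred, I would instead invoke assumption \ref{1.AN15} together with a consistency argument for $\widehat{z}_n(p)$. That argument gives $\widehat{z}_n(p)\to z(p)$ and places $\widehat{z}_n(p)\in J$ eventually, which makes the continuity-based no-tie argument of Step 1 legitimate.
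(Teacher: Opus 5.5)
Your argument is essentially the paper's. The defining property of the weighted quantile gives $0\le\widehat{F}_n(\widehat{z}_n(p))-p\le$ (jump of $\widehat{F}_n$ at $\widehat{z}_n(p)$), this jump is at most a fixed multiple of the largest normalized weight, and $F(z(p))=p$ by continuity. The weight bound is in fact deterministic: $W=\sum_{s=1}^{S}\sum_{c=1}^{H_s}M_{sc}$ exactly, so $w_{sc_sb_{c_s}h}\le 1/(k\,n_s)=1/(k\,a_s n)$ and no law of large numbers is needed.

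Your tie bookkeeping can be simplified and partly corrected. Under \ref{1.AN9} distinct sampled clusters are independent, and \ref{1.AN14}, holding at every quantile, rules out atoms of $F$, so almost surely the only ties are within a single cluster. Marginal continuity does \emph{not} exclude those within-cluster ties under intra-cluster dependence, contrary to your Step 1, but they inflate the jump by at most the factor $2k$. The with-replacement duplication issue and the $O_p(\log n)$ multiplicity claim in Step 3 are therefore unnecessary, and that claim would in any case need roughly equal PPS selection probabilities.
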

\begin{lemma}\label{1.L6}For any $\epsilon>0, \text{ there exists }M>0$ such that \[P\{|\sqrt{n}(F(z(p))-\widehat{F}_n(z(p)))|\geq M\}<\epsilon \text{ for all }n\in\mathbb{N}.\]
\end{lemma}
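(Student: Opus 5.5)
We need to prove Lemma \ref{1.L6}: tightness of $\sqrt{n}(F(z(p))-\widehat{F}_n(z(p)))$ uniformly in $n$. The plan is to show that this quantity has bounded second moment uniformly in $n$ and then apply Chebyshev's inequality. Fix $p\in[0,1]$ and write $x_0=z(p)$.

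First, linearize the ratio estimator $\widehat{F}_n(x_0)=\widehat{T}_n/\widehat{W}_n$. The numerator is $\widehat{T}_n=\sum W_{sc_sb_{c_s}h}\mathbf{1}(x_{sc_sb_{c_s}h}\le x_0)/(\text{population size})$. The denominator is $\widehat{W}_n=W/(\text{population size})$, i.e. normalized by the total number of households $\sum_s\sum_{c_s}M_{sc_s}$.

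Both numerator and denominator are sums over the $n$ sampled clusters of independent cluster-level terms, by Assumption \ref{1.AN9}. These terms are identically distributed within strata (Assumption \ref{1.AN11}) and have the stratum proportions $a_s$ as in Notation N3. The cluster-level summand of $\widehat{T}_n-F(x_0)\widehat{W}_n$ is exactly $-\tilde{m}^3_i$ evaluated at $\theta_0$ (up to the factor $a_s^{-1}$). Its design expectation is zero because $F(z(p))=p$ by continuity of $F$ near $z(p)$ (Assumption \ref{1.AN14}), and because PPS-with-replacement/SRS weighting gives unbiasedness.

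Hence
\[
F(x_0)-\widehat{F}_n(x_0)=\frac{F(x_0)\widehat{W}_n-\widehat{T}_n}{\widehat{W}_n}=\frac{1}{\widehat{W}_n}\cdot\frac{1}{n}\sum_{i=1}^n \tilde{m}^3_i(\theta_0)\ \text{(suitably scaled)}.
\]
The numerator $U_n=n^{-1/2}\sum_{i=1}^n\tilde{m}^3_i(\theta_0)$ is a normalized sum of independent mean-zero terms. Its variance is $n^{-1}\sum_{i}\mathrm{Var}(\tilde{m}^3_i(\theta_0))$. This is bounded uniformly in $n$: the sequence converges to the corresponding diagonal entry of $W_0$ by Assumption \ref{1.AN7}, and each term is finite by Assumption \ref{1.AN6}. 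A convergent sequence of finite numbers is bounded, say by $B$. Chebyshev then gives $P(|U_n|\ge M_1)\le B/M_1^2$ for all $n$. (Alternatively, Assumption \ref{1.AN15} applied with a fixed reference point yields the same order $Cn^{-1}$ variance bound directly.)

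Next, handle the denominator $\widehat{W}_n$, whose expectation is $1$. Its variance is $O(1/n)$ by Assumption \ref{1.AN13}, using the bounded moments of cluster totals. So for any $\epsilon$ there is $n_0$ with $P(\widehat{W}_n<1/2)<\epsilon/2$ for $n\ge n_0$. The finitely many $n<n_0$ are handled separately: for each such $n$ the variable $\sqrt{n}(F(x_0)-\widehat{F}_n(x_0))$ is bounded by $\sqrt{n_0}$, since both $F$ and $\widehat{F}_n$ lie in $[0,1]$. Therefore choosing $M\ge\sqrt{n_0}$ makes those probabilities zero.

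Combining the cases: for $n\ge n_0$,
\[
P\{|\sqrt{n}(F(x_0)-\widehat{F}_n(x_0))|\ge M\}\le P(\widehat{W}_n<1/2)+P(|U_n|\ge M/2)<\epsilon/2+4B/M^2 .
\]
Choosing $M=\max\{\sqrt{n_0},\,\sqrt{8B/\epsilon}\}$ then bounds this probability by $\epsilon$ for every $n\in\mathbb{N}$.

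The main obstacle is the bookkeeping in the first step. One must verify that the centered numerator coincides with the estimating function $\tilde{m}^3_i$ of N3, so that the uniform variance bound from Assumptions \ref{1.AN6}–\ref{1.AN7} genuinely applies. This includes checking that the normalization by the random total weight $W$ is the only source of ratio nonlinearity. I would first try doing this explicitly via the weight formula $W_{sc_sb_{c_s}h}$; if the normalization is awkward, I would fall back on a direct computation of $\mathrm{Var}(\widehat{T}_n)$ using independence across clusters together with the bound $|\mathbf{1}(\cdot)|\le1$ times the cluster-size ratios.
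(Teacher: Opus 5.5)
Your argument is correct. You identify $\sqrt{n}\,(F(z(p))-\widehat F_n(z(p)))$ with a normalized sum of independent cluster-level terms, bound its second moment uniformly in $n$, and apply Chebyshev; that is exactly what tightness requires. The paper defers its own proof of this lemma to the supplement, so I can only assess yours on its merits, not match it to the paper's argument. The other natural route, deducing tightness from the asymptotic normality of $\widehat F_n(z(p))$, is less elementary and still needs a separate treatment of small $n$. The bookkeeping you worried about is exact: $F(z(p))-\widehat F_n(z(p))=(nM^{\ast})^{-1}\sum_{i=1}^{n}\tilde m^3_i(\theta_0)$, where $M^{\ast}$ is the population size and the $a_s^{-1}$ in N3 is absorbed by $n_s=a_sn$.

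Three points should be cleaned up, though none of them breaks the proof.

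\textbf{There is no ratio to linearize.} For every sampled cluster, $\sum_{b_{c_s}=1}^{2}\sum_{h=1}^{k}W_{sc_sb_{c_s}h}=n_s^{-1}\sum_{c=1}^{H_s}M_{sc}$. Hence $W=\sum_{s=1}^{S}\sum_{c=1}^{H_s}M_{sc}$ is the non-random population size and your $\widehat W_n\equiv 1$, so the denominator step and the split at $n_0$ are vacuous. In fact $\widehat F_n(z(p))=\sum_{s}n_s^{-1}\sum_{i=1}^{n_s}Y_{si}$ with independent $Y_{si}\in[0,M^{\ast}_s/M^{\ast}]$, where $M^{\ast}_s$ is the size of stratum $s$. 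This gives $n\,\mathrm{Var}(\widehat F_n(z(p)))\le\frac14\sum_s (M^{\ast}_s/M^{\ast})^2/a_s$, and Chebyshev yields the bound for every $n$ at once without Assumptions~\ref{1.AN6}--\ref{1.AN7}.

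\textbf{The individual summands are not mean zero.} When strata differ in distribution (Assumption~\ref{1.AN10}), a cluster from stratum $s$ has $E\,\tilde m^3_i(\theta_0)$ proportional to $p-F_s(z(p))$, with $F_s$ the income distribution in stratum $s$. Only the sum is centred, through the unbiasedness $E\widehat F_n(z(p))=F(z(p))=p$. That is all Chebyshev needs, so the conclusion stands, but the sentence calling them ``independent mean-zero terms'' is wrong.

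\textbf{Drop the parenthetical appeal to Assumption~\ref{1.AN15}.} It only controls increments $\widehat F_n(x+\delta)-\widehat F_n(x)$ with both points in the neighbourhood $J$ of $z(p)$. In general $\widehat F_n$ is not degenerate at any point of $J$, so that assumption cannot bound $\mathrm{Var}(\widehat F_n(z(p)))$ itself.
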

\begin{lemma}\label{1.L7}For each $p\in[0,1]$, $\sqrt{n}\left\{\widehat{z}_n(p)-z(p)+\frac{\widehat{F}_n(z(p))-F(z(p))}{f(z(p))}\right\}\xrightarrow{P}0 \text{ as } n\rightarrow\infty.$
\end{lemma}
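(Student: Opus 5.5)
We prove Lemma~\ref{1.L7} with a Bahadur-type argument: localize $\widehat{z}_n(p)$ to an $n^{-1/2}$-neighbourhood of $z(p)$, then linearize $\widehat{F}_n$ there. Write $z=z(p)$ and $f=f(z)=F'(z)>0$. By assumption~\ref{1.AN14}, $f$ is continuous on the neighbourhood $J$.

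\textbf{Step 1: oscillation control.} For fixed $M>0$ set
\[
\Delta_n(t)=\sqrt{n}\Big\{\big[\widehat{F}_n(z+t/\sqrt n)-\widehat{F}_n(z)\big]-\big[F(z+t/\sqrt n)-F(z)\big]\Big\},\qquad |t|\le M .
\]
The aim is $\sup_{|t|\le M}|\Delta_n(t)|\xrightarrow{P}0$.

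For fixed $t$, $E\,\widehat{F}_n(x)=F(x)$ up to the ratio-normalization by $W$, which is negligible at order $n^{-1/2}$. Assumption~\ref{1.AN15} then gives $\mathrm{Var}(\Delta_n(t))\le n\cdot C n^{-1}|t|/\sqrt n=C|t|/\sqrt n\to0$, so pointwise convergence follows by Chebyshev.

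To pass to the supremum:
\begin{itemize}
\item Take a grid $t_j$ on $[-M,M]$ with mesh $\eta$.
\item Between grid points, use that $\widehat{F}_n$ and $F$ are both nondecreasing. This sandwiches $\Delta_n(t)$ between neighbouring grid values plus $\sqrt n\,[F(z+t_{j+1}/\sqrt n)-F(z+t_j/\sqrt n)]\approx f\eta$.
\item Let $n\to\infty$ first, then $\eta\to0$.
\end{itemize}
This monotonicity-plus-grid step is where I expect the main technical work. The key point is that only assumption~\ref{1.AN15} and monotonicity are needed, without independence of individual households.

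\textbf{Step 2: tightness of $\sqrt n(\widehat{z}_n-z)$.} By Lemma~\ref{1.L6}, $\sqrt n(\widehat{F}_n(z)-F(z))=O_P(1)$. Consider the event $\{\sqrt n(\widehat{z}_n-z)>M\}$; it is contained in $\{\widehat{F}_n(z+M/\sqrt n)<p\}$ up to the atom handled by Lemma~\ref{1.L5}. On it,
\[
\sqrt n\big(\widehat{F}_n(z)-F(z)\big)+\Delta_n(M)+\sqrt n\big(F(z+M/\sqrt n)-p\big)<o(1).
\]
Here $F(z)=p$ by continuity, and the last term is approximately $fM$. Combining Lemma~\ref{1.L6} with Step 1, the probability of this event is below $\epsilon$ for $M$ large. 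The event $\{\sqrt n(\widehat{z}_n-z)<-M\}$ is handled symmetrically.

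\textbf{Step 3: linearization.} Set $T_n=\sqrt n(\widehat{z}_n-z)$, which is $O_P(1)$ by Step 2. Restricting to $\{|T_n|\le M\}$, Step 1 applies at the random point $t=T_n$:
\[
\sqrt n\big(\widehat{F}_n(\widehat{z}_n)-\widehat{F}_n(z)\big)=\sqrt n\big(F(\widehat{z}_n)-F(z)\big)+o_P(1)=f\,T_n+o_P(1).
\]
The second equality is the mean value theorem together with continuity of $f$ on $J$.

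Lemma~\ref{1.L5} gives $\sqrt n(\widehat{F}_n(\widehat{z}_n)-p)=o_P(1)$. Since $F(z)=p$, the left side above therefore equals $-\sqrt n(\widehat{F}_n(z)-F(z))+o_P(1)$. Dividing by $f>0$ yields
\[
\sqrt n\left\{\widehat{z}_n-z+\frac{\widehat{F}_n(z)-F(z)}{f}\right\}\xrightarrow{P}0,
\]
which is the claim.
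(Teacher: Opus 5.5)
Your proof is correct, but it takes the classical Bahadur route. The paper's argument, as far as its lemma chain indicates, is instead Ghosh's (1971) device, the form Francisco and Fuller used for survey quantiles. Lemma~\ref{1.L6} is stated uniformly in $n$ and with the sign $F-\widehat F_n$, which is exactly Ghosh's boundedness hypothesis for $W_n=\sqrt n\{F(z)-\widehat F_n(z)\}/f$.

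Ghosh's lemma works as follows. Let $V_n=\sqrt n(\widehat z_n-z)$. If $W_n=O_P(1)$ and, for every fixed $t$ and $\epsilon>0$, both $P(V_n\le t,\,W_n\ge t+\epsilon)\to0$ and $P(V_n\ge t+\epsilon,\,W_n\le t)\to0$, then $V_n-W_n\xrightarrow{P}0$. For fixed $t$, the switching identity gives, in your notation, $\{V_n\le t\}=\{\widehat F_n(z+t/\sqrt n)\ge p\}=\{W_n-\Delta_n(t)/f\le t_n\}$, where $t_n=\sqrt n\{F(z+t/\sqrt n)-p\}/f\to t$. So the only probabilistic input is your pointwise Chebyshev bound $\Delta_n(t)\xrightarrow{P}0$ from Assumption~\ref{1.AN15}, applied at a deterministic $t$.

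That route therefore avoids four of your steps:
\begin{itemize}
\item the grid-and-monotonicity argument for $\sup_{|t|\le M}|\Delta_n(t)|$;
\item the separate tightness proof for $\sqrt n(\widehat z_n-z)$;
\item the mean-value expansion of $F$ at the random point $\widehat z_n$;
\item the replacement of $\widehat F_n(\widehat z_n)$ by $p$ through Lemma~\ref{1.L5}.
\end{itemize}
Your route costs those extra steps but proves more. It gives tightness of $\sqrt n(\widehat z_n-z)$ and uniform local linearity of $\widehat F_n$ on $n^{-1/2}$-neighbourhoods of $z$. That is reusable whenever $\widehat F_n$ has to be evaluated at a $\sqrt n$-consistent random point.

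Two of your hedges are unnecessary.
\begin{itemize}
\item The inclusion $\{\widehat z_n>z+M/\sqrt n\}\subseteq\{\widehat F_n(z+M/\sqrt n)<p\}$ is exact, not just up to an atom. This is because $\widehat z_n$ is the smallest sample value at which the right-continuous step function $\widehat F_n$ reaches $p$.
\item There is no ratio bias. Each sampled cluster of stratum $s$ carries total weight $\sum_{c_s=1}^{H_s}M_{sc_s}/n_s$, so $W$ is the nonrandom total household count.
\end{itemize}
Both arguments implicitly need $0<p<1$ and $f(z(p))>0$, which the statement leaves unstated.
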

\begin{lemma}\label{1.L8}Let $\{R_n\}$ be a sequence of random variables such that $\sqrt{n}R_n\xrightarrow{P}0$, then $\{R_n\}$ will satisfy uniform continuity in probability condition.
\end{lemma}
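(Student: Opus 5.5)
Write $T_n=\sqrt{n}R_n$. The plan is to verify the uniform continuity in probability (u.c.i.p.) condition in the form used for Anscombe's random central limit theorem. That is, for every $\epsilon>0$ and $\eta>0$ we must find $\gamma>0$ and $n_0$ such that
\[
P\Bigl\{\max_{n':\,|n'-n|\le \gamma n}\,|T_{n'}-T_n|>\epsilon\Bigr\}<\eta \quad\text{for all } n\ge n_0 .
\]
The first step is the triangle inequality:
\[
\max_{|n'-n|\le\gamma n}|T_{n'}-T_n|\le |T_n|+\max_{|n'-n|\le \gamma n}|T_{n'}| .
\]
Since $T_n\xrightarrow{P}0$ by hypothesis, there is $n_1$ with $P\{|T_n|>\epsilon/2\}<\eta/2$ for all $n\ge n_1$. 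This disposes of the first term. The whole argument therefore reduces to showing that the maximum of $|T_{n'}|$ over the window $[(1-\gamma)n,(1+\gamma)n]$ is small in probability.

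For the window term I would first try the direct route. Choose $\gamma$ small, for instance $\gamma<1/2$, so that every index in the window satisfies $n'\ge n/2$. Then a bound of the form $\max_{n'}P\{|T_{n'}|>\epsilon/2\}$, summed over the window, would have to be pushed below $\eta/2$. This union bound does \emph{not} close from convergence in probability alone, because the window contains about $2\gamma n$ indices.

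This is the step I expect to be the main obstacle. The first thing I would try is to use the structure that the remainders in this paper actually have. In every application, namely Lemma \ref{1.L5} and Lemma \ref{1.L7} combined with Lemma \ref{1.L3}, $R_n$ is a smooth function of the cluster-level partial sums $\sum_{i\le n}\Tilde{\mathbf{m}}_i(\theta)$ and of $\widehat{F}_n$, evaluated near $\theta_0$. For such remainders the following holds:
\begin{itemize}
\item the window maximum of the centred partial sums is controlled by a Kolmogorov-type maximal inequality, using Assumption \ref{1.AN7} and the independence across clusters in Assumption \ref{1.AN9};
\item the empirical-distribution increments over the window are controlled by Assumption \ref{1.AN15}.
\end{itemize}
Together these give $\max_{|n'-n|\le\gamma n}|T_{n'}|\le |T_{\lfloor(1-\gamma)n\rfloor}|+|T_{\lceil(1+\gamma)n\rceil}|+o_P(1)$ uniformly in $n$. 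The two endpoint terms tend to $0$ in probability by the hypothesis $T_n\xrightarrow{P}0$. The remaining term is handled by stochastic equicontinuity, Assumption \ref{1.AN5}.

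Combining the two pieces, take $n_0=\max(n_1,n_2)$, where $n_2$ is the threshold from the window bound. This gives the displayed u.c.i.p. inequality with total probability below $\eta$.

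Finally I would note the consequence that is actually used in Theorem \ref{1.TH2}. Since $N/C\to1$ a.s. by Theorem \ref{1.TH1}, Anscombe's theorem applied with this u.c.i.p. yields $\sqrt{N}R_N\xrightarrow{P}0$. This is exactly what is needed to discard the remainder terms in the representation of $\widehat{G}_N$.
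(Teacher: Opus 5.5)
The step you flagged cannot be closed, because under the hypothesis $\sqrt{n}R_n\xrightarrow{P}0$ alone the statement is false. No argument valid for an arbitrary $\{R_n\}$ can therefore exist. Let $T_n=\sqrt{n}R_n$ ($n\ge 3$) be independent with $P(T_n=1)=1/\log n=1-P(T_n=0)$. Then $T_n\xrightarrow{P}0$, yet for every $\gamma\in(0,1]$
\[
P\Bigl\{T_n=0,\ \max_{n<n'\le(1+\gamma)n}T_{n'}=1\Bigr\}\ge\Bigl(1-\frac{1}{\log n}\Bigr)\Bigl[1-\exp\Bigl(-\frac{\lfloor\gamma n\rfloor}{\log((1+\gamma)n)}\Bigr)\Bigr]\longrightarrow 1 .
\]
On this event $|T_{n'}-T_n|=1$ and $\sqrt{n}|R_{n'}-R_n|\ge(1+\gamma)^{-1/2}>1/2$. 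So u.c.i.p.\ fails with $\epsilon=1/2$ in either normalisation; larger $\gamma$ only enlarges the window. The conclusion needed downstream fails as well. Let $N$ be the first $n\ge C$ with $T_n=1$, truncated at $\lceil(1+\gamma_C)C\rceil$, where $\gamma_C\to0$ but $\gamma_C C/\log C\to\infty$. Then $N/C\to1$, while $P(\sqrt{N}R_N=1)\to1$.

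Your patch therefore proves a different statement, about the specific remainders of this paper, and even that is not proved.

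\begin{itemize}
\item The bound $\max_{|n'-n|\le\gamma n}|T_{n'}|\le|T_{\lfloor(1-\gamma)n\rfloor}|+|T_{\lceil(1+\gamma)n\rceil}|+o_P(1)$ ``uniformly in $n$'' is only asserted. Its $o_P(1)$ term is exactly the window oscillation that u.c.i.p.\ asks you to control, so the argument is circular.
\item A Kolmogorov-type maximal inequality controls window maxima of centred cluster partial sums, i.e.\ the linear part (the content of Lemma~\ref{1.L3}). The remainder in Lemma~\ref{1.L7} is, by construction, what is left after that linear part is removed, and it is not a partial sum.
\item Assumption~\ref{1.AN15} is a variance bound at a single $n$ and says nothing about a maximum over $n'$.
\item Assumption~\ref{1.AN5} is equicontinuity of $v_n(\theta)$ in $\theta$ for fixed $n$, not in the sample-size index.
\end{itemize}

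What does work is a stronger hypothesis. If $\sqrt{n}R_n\to0$ almost surely, then for $\gamma\in(0,1)$
\[
\max_{|n'-n|\le\gamma n}\sqrt{n}\,|R_{n'}-R_n|\le\sqrt{n}\,|R_n|+(1-\gamma)^{-1/2}\sup_{n'\ge(1-\gamma)n}\sqrt{n'}\,|R_{n'}|\longrightarrow0\quad\text{a.s.},
\]
which gives u.c.i.p.\ immediately. More generally, it suffices that $\max_{|n'-n|\le\gamma_0 n}\sqrt{n'}|R_{n'}|\xrightarrow{P}0$ for some fixed $\gamma_0\in(0,1)$. For the quantile remainder, this requires an almost-sure Bahadur-type rate $R_n=o(n^{-1/2})$ a.s.\ under the complex design. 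Lemma~\ref{1.L7} gives only convergence in probability, so it does not supply this.
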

\begin{lemma}\label{1.L9}For each $p\in[0,1]$, $\widehat{z}_n(p)$ satisfies uniform continuity in probability condition.
\end{lemma}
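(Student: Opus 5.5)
The plan is to derive uniform continuity in probability (u.c.i.p.) for $\widehat{z}_n(p)$ from the Bahadur-type representation in Lemma~\ref{1.L7}. The u.c.i.p. condition (Anscombe's condition) here means: for every $\epsilon>0$ and $\eta>0$ there exist $\gamma>0$ and $n_0$ such that for all $n\ge n_0$,
\[
P\Big\{\max_{|n'-n|\le \gamma n}\sqrt{n}\,|\widehat{z}_{n'}(p)-\widehat{z}_n(p)|>\epsilon\Big\}<\eta .
\]
First I will write
\[
\widehat{z}_n(p)=z(p)-\frac{\widehat{F}_n(z(p))-F(z(p))}{f(z(p))}+R_n,
\]
where $\sqrt{n}R_n\xrightarrow{P}0$ by Lemma~\ref{1.L7}. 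Lemma~\ref{1.L8} then gives that $\{R_n\}$ satisfies the u.c.i.p. condition. The constant $z(p)$ contributes nothing. Since $f(z(p))>0$ is a fixed constant (continuity of the density near $z(p)$ from assumption~\ref{1.AN14}), it suffices to show that $\{\widehat{F}_n(z(p))\}$ satisfies u.c.i.p. The sum of sequences with this property again has it, by the triangle inequality together with splitting $\epsilon$ and $\eta$ in half.

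Next I will handle the term $\widehat{F}_n(z(p))$, which is a ratio estimator of the same form as $\widehat{\mu}_n$. The numerator is $\sum w\,\mathbf{1}(x\le z(p))$ and the denominator is the normalizing total $W$. Both are averages over sampled clusters of cluster-level quantities, i.e.\ of the third component $\tilde{m}^3_i(\theta_0)$ of N3, up to centering.

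By assumptions~\ref{1.AN9}--\ref{1.AN11}, these cluster-level summands are independent across clusters and identically distributed within strata. The allocation $n_s=a_s n$ is fixed, and the indicator summands are bounded. Hence the argument of Lemma~\ref{1.L3} applies verbatim with $x$ replaced by $\mathbf{1}(x\le z(p))$. That argument uses Kolmogorov's maximal inequality on the centered partial sums over the range $n\le n'\le (1+\gamma)n$, with the variance bound of order $\gamma n$ supplied by assumption~\ref{1.AN7}. The ratio step then follows from a first-order expansion around the limits, which is legitimate because the denominator converges a.s.\ to a positive constant. I will therefore simply invoke the proof of Lemma~\ref{1.L3} for both the numerator and the denominator.

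The main obstacle is that Lemma~\ref{1.L7} only controls $R_n$ along each fixed $n$, while u.c.i.p. requires a maximum over the window $|n'-n|\le\gamma n$. This is precisely what Lemma~\ref{1.L8} is asserted to provide, so I will rely on it. I will also check that the deterministic rounding in $n_s=a_s n$ across strata does not break the maximal-inequality step. If it does, I will restrict to $n$ along which all $a_s n$ are integers, or absorb the rounding error, which is $O(1)$ clusters and hence $O(n^{-1/2})$ after $\sqrt{n}$ scaling. Combining the three components then gives u.c.i.p. for $\widehat{z}_n(p)$ for each $p\in[0,1]$.
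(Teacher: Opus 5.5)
Your proposal follows essentially the same route as the paper: you decompose $\widehat{z}_n(p)$ through the Bahadur-type representation of Lemma~\ref{1.L7}, pass the remainder to the uniform continuity in probability condition via Lemma~\ref{1.L8}, treat $\widehat{F}_n(z(p))$ as a design-weighted mean of bounded indicators by the same argument used for $\widehat{\mu}_n$ in Lemma~\ref{1.L3}, and combine the pieces with the triangle inequality. The one caveat concerns the step you yourself flagged, and the paper's argument shares it: pointwise $\sqrt{n}R_n\xrightarrow{P}0$ does not control $\max_{|n'-n|\le\gamma n}\sqrt{n}|R_{n'}|$ (independent $R_n$ with $P(\sqrt{n}R_n=1)=1/\log n=1-P(R_n=0)$ is a counterexample to Lemma~\ref{1.L8} as stated), so this step is airtight only if the remainder in Lemma~\ref{1.L7} is shown either to satisfy $\sqrt{n}R_n\to0$ almost surely or to satisfy the maximal form $\sqrt{n}\max_{|n'-n|\le\gamma n}|R_{n'}|\xrightarrow{P}0$.
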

\begin{lemma}\label{1.L10}$\int_{0}^{1}\widehat{\alpha}_n(p)dp$ satisfies uniform continuity in probability condition.
\end{lemma}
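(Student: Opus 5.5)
We need to show that $\int_0^1\widehat{\alpha}_n(p)\,dp$ satisfies the uniform continuity in probability (u.c.i.p.) condition. That is, for every $\epsilon>0$ there should exist $\eta>0$ such that, for all large $n$,
\[
P\Big\{\max_{0\le j\le n\eta}\sqrt{n}\,\Big|\int_0^1\widehat{\alpha}_{n+j}(p)\,dp-\int_0^1\widehat{\alpha}_n(p)\,dp\Big|>\epsilon\Big\}<\epsilon .
\]
The plan is to write the statistic as a double integral of the sample quantile,
\[
\int_0^1\widehat{\alpha}_n(p)\,dp=\int_0^1\!\int_0^p\widehat{z}_n(u)\,du\,dp=\int_0^1(1-u)\,\widehat{z}_n(u)\,du ,
\]
using Fubini. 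This is legitimate because $\widehat{z}_n$ is a step function taking finitely many sample values. We then linearize $\widehat{z}_n(u)$ via the Bahadur-type representation of Lemma~\ref{1.L7}. Setting $f=dF/dx$, which exists by \ref{1.AN14}, this gives
\[
\widehat{z}_n(u)=z(u)-\frac{\widehat{F}_n(z(u))-F(z(u))}{f(z(u))}+R_n(u),\qquad \sqrt{n}R_n(u)\xrightarrow{P}0 .
\]
Integrating against $(1-u)$, the statistic splits into three pieces:
\[
\int_0^1(1-u)z(u)\,du-\int_0^1(1-u)\frac{\widehat{F}_n(z(u))-F(z(u))}{f(z(u))}\,du+\int_0^1(1-u)R_n(u)\,du .
\]
The first piece is deterministic, so its increments vanish. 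The last piece, $\bar R_n$, satisfies $\sqrt n\bar R_n\to0$ in probability, so Lemma~\ref{1.L8} gives it u.c.i.p. We still need that integrated convergence, not just the pointwise statement. We get it by dominated convergence applied to $E\min(1,\sqrt n|R_n(u)|)$ over $u$. The required domination of $\sqrt n R_n(u)$ should follow by combining Lemma~\ref{1.L6} with Lemma~\ref{1.L5}, which give tightness of $\sqrt n(\widehat F_n(z(u))-F(z(u)))$ uniformly in $n$.

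The middle piece is the core of the argument. Substituting $x=z(u)$, so that $du=f(x)\,dx$, the factor $1/f$ cancels and the middle integral becomes
\[
\int (1-F(x))\big(\widehat F_n(x)-F(x)\big)\,dx=\sum w_{sc_sb_{c_s}h}\,g(x_{sc_sb_{c_s}h})-E\,g(X),\qquad g(y)=\int_{y}^{\infty}(1-F(x))\,dx .
\]
This is a weighted design mean of the function $g$, which is the same type of object as $\widehat\mu_n$, because $g(y)\le E|X|+|y|$ is integrable by \ref{1.AN16}. So the u.c.i.p. argument of Lemma~\ref{1.L3} carries over essentially verbatim. Write the piece as a ratio of cluster-level sums of i.i.d.-within-stratum terms, using \ref{1.AN9}--\ref{1.AN11}. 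Then apply Kolmogorov's maximal inequality to the partial sums over $n\le m\le n+n\eta$. The variance bound comes from \ref{1.AN12}/\ref{1.AN13}, and the random normalizer $W$ is handled as in Lemma~\ref{1.L3}. Alternatively, Lemma~\ref{1.L9} can be integrated in $u$ directly, but that requires a maximal bound uniform in $u$, which the substitution avoids.

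Finally, u.c.i.p. is preserved under finite sums. If $\max_j\sqrt n|A_{n+j}-A_n|$ and $\max_j\sqrt n|B_{n+j}-B_n|$ each exceed $\epsilon/2$ with probability below $\epsilon/2$ for some $\eta_A$ and $\eta_B$, then taking $\eta=\min(\eta_A,\eta_B)$ gives the result.

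The main obstacle is the remainder $\bar R_n$. Lemma~\ref{1.L7} is only pointwise in $u$, and it degenerates near $u=0,1$, where $f(z(u))$ may vanish. I would first truncate to $u\in[\tau,1-\tau]$ and handle the compact middle range by dominated convergence as above. The tails $\int_0^\tau$ and $\int_{1-\tau}^1$ of $(1-u)\widehat z_n(u)$ I would control directly, since they are again weighted trimmed means of $x$ whose increments are small uniformly by the maximal-inequality argument of Lemma~\ref{1.L3}, with the bound vanishing as $\tau\to0$.
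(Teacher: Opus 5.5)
\paragraph{Where your route differs, and the gap.} Your route differs from the paper's. Judging from its chain of lemmas, the paper obtains this result from the quantile-level statement (Lemma~\ref{1.L9}) by integrating in $u$ and $p$. As you observe, that tacitly needs the u.c.i.p.\ of $\widehat z_n(u)$ to hold uniformly and integrably in $u$. Your Fubini step is correct. So is the change of variables that turns the Bahadur linear term into the design mean $\sum w_{sc_sb_{c_s}h}\,g(x_{sc_sb_{c_s}h})-E\,g(X)$, and that term is handled by the argument of Lemma~\ref{1.L3} (for nonnegative incomes $g\le\mu$, so it is even bounded).

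The gap is the remainder $\bar R_n=\int_0^1(1-u)R_n(u)\,du$:
\begin{itemize}
\item Dominated convergence applied to $E\min(1,\sqrt n|R_n(u)|)$ only yields $\int_0^1\min(1,\sqrt n|R_n(u)|)\,du\xrightarrow{P}0$. Passing to $\sqrt n\,\bar R_n\xrightarrow{P}0$ requires uniform integrability of $\sqrt n R_n(u)$ jointly in $n$ and $u$.
\item Lemmas~\ref{1.L5}--\ref{1.L7} are fixed-$p$ statements (tightness and convergence in probability, no moments), so they cannot supply this.
\item The factor $1/f(z(u))$ rules out any bound uniform in $u$ near $u=1$, for all three income models of Section~\ref{1.S5}.
\item Truncation does not close the gap. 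On $[\tau,1-\tau]$ you still need a uniform-in-$u$ moment bound for $\sqrt n(\widehat z_n(u)-z(u))$, which no lemma provides.
\item The tail pieces are trimmed at the random points $\widehat z_n(\tau)$ and $\widehat z_n(1-\tau)$. They are therefore not partial sums of independent cluster totals, and the maximal-inequality argument of Lemma~\ref{1.L3} does not carry over verbatim.
\end{itemize}
In effect you have moved the uniformity problem into the remainder rather than removed it.

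\paragraph{The missing observation: the remainder is explicit.} For nonnegative data, $\widehat z_n(u)\le x$ if and only if $u\le\widehat F_n(x)$. Hence $\widehat z_n(u)=\int_0^\infty\mathbf{1}\{\widehat F_n(x)<u\}\,dx$, and so $\int_0^1(1-u)\widehat z_n(u)\,du=\frac12\int_0^\infty\{1-\widehat F_n(x)\}^2dx$; likewise for $F$ and $z$. Expanding the square gives
\[
\int_0^1\widehat{\alpha}_n(p)\,dp-\int_0^1\alpha(p)\,dp=-\int_0^\infty\{1-F(x)\}\{\widehat{F}_n(x)-F(x)\}\,dx+\frac{1}{2}\int_0^\infty\{\widehat{F}_n(x)-F(x)\}^2\,dx .
\]
The first term is exactly your linear piece, with no density, Bahadur representation or change of variables needed. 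The second term is exactly $\bar R_n\ge 0$.

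In this design the total weight $W$ equals the total number of households. So $\widehat F_m(x)$ is a weighted average of within-stratum i.i.d.\ cluster terms lying in $[0,1]$. Doob's inequality applied pointwise in $x$ then gives $\int_0^\infty E\max_{n\le m\le n(1+\eta)}\{\widehat F_m(x)-F(x)\}^2dx=O(n^{-1})$ under a finite mean. Hence $E\max_{n\le m\le n(1+\eta)}\sqrt n\,\bar R_m=O(n^{-1/2})$, which gives the u.c.i.p.\ of the remainder directly.

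This also spares you Lemma~\ref{1.L8}, whose hypothesis (pointwise $\sqrt n R_n\xrightarrow{P}0$) does not by itself control the maximum over $n\le m\le n(1+\eta)$. With this repair your argument is tighter than integrating Lemma~\ref{1.L9}.
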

\begin{lemma}\label{1.L11}Under assumptions \ref{1.AN1}-\ref{1.AN10}. For each $p\in[0,1], \text{ as }n\rightarrow\infty$  \[\sqrt{n}(\widehat{\mu}_{n}-\mu)\xrightarrow{d}N(0,V_1),\text{ and }\sqrt{n}(\widehat{\alpha}_{n}(p)-\alpha(p))\xrightarrow{d}N(0,V_{2,p}),\text{for some } V_1>0\text{ and }V_{2,p}>0.\]
\end{lemma}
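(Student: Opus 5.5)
We need to prove Lemma \ref{1.L11}: the asymptotic normality of $\sqrt{n}(\widehat{\mu}_n-\mu)$ and of $\sqrt{n}(\widehat{\alpha}_n(p)-\alpha(p))$ under the stratified, clustered, sub-stratified design. The plan is to cast both quantities as components of a just-identified estimating-equation (GMM-type) estimator built from the moment function $\Tilde{\mathbf m}$ of N2, and then apply the standard asymptotic normality argument for extremum estimators with non-smooth criteria (in the spirit of \cite{BD2007}).

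First, for fixed $p$, note that $\widehat\theta_n=(\widehat\mu_n,\widehat\alpha_n(p),\widehat z_n(p))$ solves, up to a remainder, the sample moment equation
\[\frac1n\sum_{i=1}^n \Tilde{\mathbf m}_i(\widehat\theta_n)=o_p(n^{-1/2}).\]
Each $\Tilde{\mathbf m}_i$ from N3 is the weighted cluster total, rescaled by $1/a_s$. Up to the normalisation $W$, summing the $\Tilde{\mathbf m}_i$ reproduces exactly the weighted sums defining $\widehat\mu_n$, $\widehat F_n$ and $\widehat\alpha_n$.
\begin{itemize}
\item The first coordinate vanishes identically at $\widehat\mu_n$.
\item The third coordinate is $p-\widehat F_n(\widehat z_n(p))$, which is $O_p(n^{-1})$ by the definition of the sample quantile; alternatively, invoke Lemma \ref{1.L5}.
\item The second coordinate is $\widehat\alpha_n(p)-\sum w\,x\mathbf 1(x\le \widehat z_n(p))$, which is zero up to the same discretisation error.
\end{itemize}
Consistency $\widehat\theta_n\xrightarrow{P}\theta_0$ follows from Assumptions \ref{1.AN1}--\ref{1.AN4}. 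Dominance (\ref{1.AN2}) and compactness (\ref{1.AN3}) give a uniform law of large numbers, and identification comes from \ref{1.AN4}.

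Second, expand the moment condition. Write
\[0=\frac{1}{\sqrt n}\sum_i E\Tilde{\mathbf m}_i(\widehat\theta_n)+v_n(\widehat\theta_n)+o_p(1).\]
By stochastic equicontinuity (\ref{1.AN5}) together with consistency, $v_n(\widehat\theta_n)-v_n(\theta_0)=o_p(1)$. By differentiability of the mean (\ref{1.AN4}) and $E\Tilde{\mathbf m}_i(\theta_0)=0$, the mean term equals $K\sqrt n(\widehat\theta_n-\theta_0)+o_p(1)$. Since $K$ is non-singular, this yields
\[\sqrt n(\widehat\theta_n-\theta_0)=-K^{-1}v_n(\theta_0)+o_p(1).\]

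Third, establish a CLT for $v_n(\theta_0)=n^{-1/2}\sum_i\{\Tilde{\mathbf m}_i(\theta_0)-E\Tilde{\mathbf m}_i(\theta_0)\}$. Its summands are independent across clusters by \ref{1.AN9}, identically distributed within a stratum by \ref{1.AN11}, and heterogeneous across strata by \ref{1.AN10}. The proportions $n_s/n=a_s$ are fixed, and the averaged variance converges to $W_0$ by \ref{1.AN7}. The third-moment bound \ref{1.AN6} (or equivalently \ref{1.AN13}) gives the Lyapunov condition, since $\sum_i E|\Tilde{\mathbf m}_i|^3/n^{3/2}=O(n^{-1/2})\to0$. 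Applying the Cram\'er--Wold device with the Lyapunov CLT gives $v_n(\theta_0)\xrightarrow{d}N(0,W_0)$. Hence $\sqrt n(\widehat\theta_n-\theta_0)\xrightarrow{d}N(0,K^{-1}W_0K^{-\top})$.

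Projecting onto the first and second coordinates then gives the two claims, with $V_1$ and $V_{2,p}$ the corresponding diagonal entries. Positivity of these entries follows because $W_0$ is non-degenerate: income is non-constant within strata, and $K$ is invertible.

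The main obstacle is the first step, namely controlling the approximate zero of the non-smooth indicator moments at $\widehat z_n(p)$ and justifying that the normalised weights $w=W_{\cdot}/W$ can be replaced by the design weights. The ratio $W/(\text{population size})\to1$ a.s. by the strong law, so this substitution changes things only by $o_p(n^{-1/2})$ via a delta-method argument. For the quantile part I will lean on Lemmas \ref{1.L5}--\ref{1.L7} and Assumptions \ref{1.AN14}--\ref{1.AN15}, rather than on equicontinuity alone.
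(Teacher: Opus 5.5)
Your proposal is correct and follows essentially the route the paper's listed assumptions point to (the paper's own proof is in the supplementary material, which I have not seen, so this is an inference). That route treats $(\widehat\mu_n,\widehat\alpha_n(p),\widehat z_n(p))$ as an approximate root of $\frac1n\sum_i\Tilde{\mathbf m}_i(\theta)=0$, invokes the GMM/estimating-equation asymptotics of \cite{BD2007} under those assumptions, and reads off $V_1$ and $V_{2,p}$ as diagonal entries of $K^{-1}W_0K^{-\top}$.

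Two side remarks. First, the weight normalization you flag as the main obstacle is not an issue: the root of the estimating equations does not change when all weights are rescaled, and under this design $W=\sum_s\sum_{c_s=1}^{H_s}M_{sc_s}$ exactly. So your strong-law/delta-method step is unnecessary, and in general it would not deliver $o_p(n^{-1/2})$. Second, positivity of $V_{2,p}$ does not follow from the assumptions; it fails at $p=0$, where $\widehat\alpha_n(0)=\alpha(0)=0$. It has to be imposed through non-degeneracy of $W_0$, not argued from incomes being non-constant.
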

\begin{lemma}\label{1.L12}If $N$ is the non-negative integer variable and $C$ is the optimal number of clusters such that $\frac{N}{C}\xrightarrow{P}1  $, then under assumptions \ref{1.AN1}-\ref{1.AN10}, for each $p\in[0,1]$, as $\omega\rightarrow0$ \\
$\textit{(i) } \sqrt{N}(\widehat{\mu}_{N}-\mu)\xrightarrow{d}N(0,V_1).$\hspace{4cm}
$\textit{(ii) } \sqrt{N}(\widehat{\alpha}_{N}(p)-\alpha(p))\xrightarrow{d}N(0,V_{2,p}).$\\
$\textit{(iii) } (\widehat{\mu}_{N}-\mu)\xrightarrow{a.s.}0\text{ and }(\widehat{\alpha}_{N}(p)-\alpha(p))\xrightarrow{a.s.}0.$\hspace{0.8cm}
$\textit{(iv) } \int_{0}^{1}(\widehat{\alpha}_{N}(p)-\alpha(p))dp\xrightarrow{P}0.$
\end{lemma}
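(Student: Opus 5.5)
The plan is to derive (i) and (ii) from the random central limit theorem of Anscombe type. Its two ingredients are the fixed-$n$ asymptotic normality in Lemma~\ref{1.L11} and the uniform continuity in probability (u.c.i.p.) of the relevant statistics, established in Lemmas~\ref{1.L3}, \ref{1.L8}--\ref{1.L10}. Part (iii) will then follow from strong consistency along deterministic sequences together with $N\to\infty$. Part (iv) will follow from the representation $\widehat{G}$-type integral functionals and dominated convergence.

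Throughout, write $n_0=\lfloor C\rfloor$. Since $C\to\infty$ as $\omega\to0$ and $N/C\xrightarrow{P}1$, we have $N/n_0\xrightarrow{P}1$. The Anscombe condition for a sequence $T_n$ is: for every $\epsilon,\eta>0$ there exist $\gamma>0$ and $n^*$ such that
\[
P\Big(\max_{|n-n_0|\le\gamma n_0}\sqrt{n_0}\,|T_n-T_{n_0}|>\epsilon\Big)<\eta\quad\text{for } n_0\ge n^*.
\]

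For (i), take $T_n=\widehat{\mu}_n$. By Lemma~\ref{1.L3} this condition holds. Anscombe's theorem, combined with Lemma~\ref{1.L11}, gives $\sqrt{n_0}(\widehat{\mu}_N-\mu)\xrightarrow{d}N(0,V_1)$. Slutsky with $N/n_0\xrightarrow{P}1$ then replaces $\sqrt{n_0}$ by $\sqrt{N}$.

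For (ii), argue the same way for $\widehat{\alpha}_n(p)=\int_0^p\widehat{z}_n(u)\,du$. The u.c.i.p. of $\widehat{\alpha}_n(p)$ should come from a Bahadur-type decomposition. Lemma~\ref{1.L7} gives
\[
\widehat{z}_n(u)=z(u)-\frac{\widehat{F}_n(z(u))-F(z(u))}{f(z(u))}+R_n(u),\qquad \sqrt{n}R_n(u)\xrightarrow{P}0.
\]
Integrating in $u$, $\widehat{\alpha}_n(p)$ splits into three parts:
\begin{itemize}
\item a deterministic part;
\item a linear part, which is an average of independent cluster-level terms in the sense of N3, so a Kolmogorov maximal inequality gives u.c.i.p.;
\item a remainder, which is u.c.i.p. by Lemma~\ref{1.L8}.
\end{itemize}
Lemmas~\ref{1.L9}--\ref{1.L10} essentially record this. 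Then Anscombe plus Lemma~\ref{1.L11} yields the result.

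For (iii), note that $\widehat{\mu}_n\xrightarrow{a.s.}\mu$ and $\widehat{\alpha}_n(p)\xrightarrow{a.s.}\alpha(p)$ as $n\to\infty$. These follow from a strong law for independent, non-identically distributed cluster totals across strata, using the variance summability in \ref{1.AN7} and Kolmogorov's criterion, with \ref{1.AN12}--\ref{1.AN13} supplying moments. Since $N\ge m\to\infty$ surely as $\omega\to0$ (by the pilot size formula), the composition $\widehat{\mu}_N$ inherits almost sure convergence on the almost-sure event where the full sequence converges.

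For (iv), note that $|\widehat{\alpha}_N(p)-\alpha(p)|$ is bounded by $\widehat{\mu}_N+\mu$, since $\widehat{\alpha}_N(p)\le\widehat{\mu}_N$. By (iii) this bound is almost surely bounded, and pointwise convergence holds for each $p$. Dominated convergence in $p$ on $[0,1]$ (a pathwise integral) then gives almost sure, hence in-probability, convergence of $\int_0^1(\widehat{\alpha}_N-\alpha)\,dp$. Some care is needed to obtain a single null set for all $p$. I would handle this by using monotonicity of $\widehat{\alpha}_N$ in $p$ and continuity of $\alpha$, reducing to rational $p$ in a Glivenko--Cantelli style argument.

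The main obstacle I expect is the u.c.i.p. of $\widehat{\alpha}_n(p)$ needed in (ii). This is where the quantile remainder must be controlled uniformly over a window $|n-n_0|\le\gamma n_0$, not just at fixed $n$. My first attempt would use \ref{1.AN15} to bound oscillations of $\widehat{F}_n$ near $z(u)$ over the window, then apply a maximal inequality for the cluster-sum martingale.
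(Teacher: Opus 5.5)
Your overall plan is the route the paper's appendix lemmas are set up for: Anscombe's random CLT, fed by the fixed-$n$ CLT of Lemma~\ref{1.L11} and u.c.i.p., for (i)--(ii); and a strong law combined with $N\to\infty$ almost surely for (iii), as the paper does with Gut's result for $V_N^2$. Parts (i), (iii) and (iv) are sound in outline. Using $N\ge m\to\infty$, rather than only $N/C\xrightarrow{P}1$, is exactly what (iii) needs.

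\textbf{The gap in (ii).} Lemmas~\ref{1.L9}--\ref{1.L10} do not give u.c.i.p.\ of $\widehat\alpha_n(p)$ at a fixed $p$: one concerns $\widehat z_n(p)$, the other $\int_0^1\widehat\alpha_n(p)\,dp$. So you must prove it yourself, and your proof integrates Lemma~\ref{1.L7} over $u\in[0,p]$. That lemma is pointwise in $u$. Knowing $\sqrt n R_n(u)\xrightarrow{P}0$ for each $u$ does not give $\sqrt n\int_0^pR_n(u)\,du\xrightarrow{P}0$ without uniformity in $u$ or an $L^1$ domination. Uniformity is not available from \ref{1.AN14}--\ref{1.AN15}: for the Gamma$(2.649,0.84)$ model $f(0)=0$, so $1/f(z(u))\to\infty$ as $u\downarrow0$. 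Hence the integrated remainder has not been shown to be $o_p(n^{-1/2})$ even at fixed $n$, and Lemma~\ref{1.L8} has nothing to act on.

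Lemma~\ref{1.L8}'s hypothesis also cannot control $\max_{|n-n_0|\le\gamma n_0}$. Independent $\{0,1\}$-valued $\sqrt nR_n$ with $P(\sqrt nR_n=1)=1/\log n$ satisfy pointwise $\sqrt nR_n\xrightarrow{P}0$ but are not u.c.i.p. So you are right that the window is a real obstacle. However, your proposed fix (controlling oscillations of $\widehat F_n$ near each $z(u)$) is again pointwise in $u$.

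\textbf{A repair.} Both issues disappear if you do not integrate in $u$. For nonnegative incomes, the equivalence $\widehat z_n(u)\le x\iff u\le\widehat F_n(x)$ together with Fubini gives the exact identities
\[
\widehat\alpha_n(p)=\int_0^{\widehat z_n(p)}\bigl(p-\widehat F_n(x)\bigr)\,dx,\qquad \alpha(p)=\int_0^{z(p)}\bigl(p-F(x)\bigr)\,dx.
\]
After the substitution $x=z(u)$ you already use for the linear part, your remainder is exactly $\int_0^pR_n(u)\,du=\int_{z(p)}^{\widehat z_n(p)}\bigl(p-\widehat F_n(x)\bigr)\,dx$. Monotonicity of $\widehat F_n$ and the definition of $\widehat z_n(p)$ show this lies between $0$ and $|\widehat z_n(p)-z(p)|\,|\widehat F_n(z(p))-p|$.

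Over $|n-n_0|\le\gamma n_0$:
\begin{itemize}
\item the first factor is $o_p(1)$, by strong consistency of $\widehat z_n(p)$;
\item the second is $O_p(n_0^{-1/2})$, by Kolmogorov's maximal inequality applied to the independent cluster sums forming the numerator and denominator of $\widehat F_n(z(p))$.
\end{itemize}
So the remainder is $o_p(n_0^{-1/2})$ uniformly on the window, with no appeal to \ref{1.AN15} or Lemma~\ref{1.L8}. The linear part $\int_0^{z(p)}\widehat F_n(x)\,dx=\sum_{s,c_s,b_{c_s},h}w_{sc_sb_{c_s}h}\,(z(p)-x_{sc_sb_{c_s}h})^+$ is a ratio of cluster sums. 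It is u.c.i.p.\ by the same argument as for $\widehat\mu_n$ in Lemma~\ref{1.L3}. This settles (ii) for $0<p<1$. At $p=1$, $\widehat\alpha_n(1)=\widehat\mu_n$, so (ii) reduces to (i).
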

\noindent{\bf Proof of Theorem~\ref{1.TH1} }\textit{(i):} 
From stopping rule \eqref{1.5}, we get,
\[ \left(\frac{2z_\frac{\alpha}{2}V_N}{\omega}\right)^2\leq N\leq m+\left(\frac{2z_\frac{\alpha}{2}}{\omega}\right)^2\left(V_{N-1}^2+\frac{1}{(N-1)^{\delta}}\right).\]
By dividing the above inequality by $C$, we get,
\begin{eqnarray}\label{1.N/C}
\left(\frac{V_N}{\xi}\right)^2\leq \frac{N}{C}\leq \frac{m}{C}+\left(\frac{1}{\xi}\right)^2\left(V_{N-1}^2+\frac{1}{(N-1)^{\delta}}\right).
\end{eqnarray}
Using Lemma \ref{1.L2} and \cite{GUT}, we have 
\begin{eqnarray}\label{arna1}
V_N^2\xrightarrow{a.s.}\xi^2.
\end{eqnarray}
Since ${m}/{C}\rightarrow0$ as $\omega\rightarrow0$, by equations \eqref{1.N/C} and \eqref{arna1}, we conclude that ${N}/{C}\xrightarrow{a.s.}1$ as $\omega\rightarrow0$.
Now, we will prove that $E({N})/{C}\rightarrow1$ as $\omega\rightarrow0$. From equation \eqref{1.N/C}, we have,
\[N/C-m/C\leq\frac{1}{\xi^2}\left(\sup_{n\geq m}V_{n}^2+\frac{1}{(m-1)^{\delta}}\right).\]
Since
\[\sup_{n\geq m}V_{n}^2=\sup_{n\geq m}\left(\sum_{s=1}^{S}\frac{n\times n_s}{n_s-1}\sum_{c_s=1}^{n_s}\left(u_{sc_s}-\bar{u}_s\right)^2\right)\leq \sum_{s=1}^{S}H\times H_s\left(\sup_{n\geq m}\left(\frac{1}{n_s-1}\sum_{c_s=1}^{n_s}\left(u_{sc_s}-\bar{u}_s\right)^2\right)\right),\]
by taking Expectations on both sides, we get  
\[E\left(\sup_{n\geq m}V_{n}^2\right)\leq\sum_{s=1}^{S}H\times H_s\times E\left(\sup_{n\geq m}\left(\frac{1}{n_s-1}\sum_{c_s=1}^{n_s}\left(u_{sc_s}-\bar{u}_s\right)^2\right)\right).\]
Using Cauchy-Schwarz inequality and Lemma 9.2.4 of \cite{GHOSH}, we get
\[E\left(\sup_{n\geq m}V_{n}^2\right)\leq\sum_{s=1}^{S}H\times H_s\times \left[E\left(\sup_{n\geq m}\left(\frac{1}{n_s-1}\sum_{c_s=1}^{n_s}\left(u_{sc_s}-\bar{u}_s\right)^2\right)^2\right)\right]^{1/2}\]\[\leq\sum_{s=1}^{S}H\times H_s\times \left[4\times E\left(\left(\frac{1}{m_s-1}\sum_{c_s=1}^{m_s}\left(u_{sc_s}-\bar{u}_s\right)^2\right)^2\right)\right]^{1/2}<\infty.\] 
Since $N/C\xrightarrow{P}1$ as $\omega\rightarrow0$, by dominated convergence theorem, we conclude that
\[E\left(\frac{N}{C}\right)\rightarrow1 \text{ as } \omega\rightarrow0.\]
\textit{(ii)} From the stopping rule in equation \eqref{1.7}, we have
\[ \left(\frac{2z_\frac{\alpha}{2}V_t}{\omega}\right)^2\leq Q\leq t+\left(\frac{2z_\frac{\alpha}{2}}{\omega}\right)^2\left(V_{t}^2\right).\]
By dividing the above inequality by $C$, we get
\begin{eqnarray}\label{1.Q/C}
\frac{V_{t}^2}{\xi^2}\leq \frac{Q}{C}\leq \frac{t}{C}+\frac{V_{t}^2}{\xi^2}.
\end{eqnarray}
Since $t\rightarrow\infty$ as $\omega\rightarrow0$, $V_t^2\xrightarrow{a.s.}\xi^2$. Now, $\frac{t}{C}\rightarrow0$ as $\omega\rightarrow0$. Hence, using equation \eqref{1.Q/C}, we conclude that $\frac{Q}{C}\xrightarrow{a.s.}1$ as $\omega\rightarrow0$. 
Taking expectation both sides, we get
\[ \frac{E\left(V_{t}^2\right)}{\xi^2}\leq \frac{E(Q)}{C}\leq \frac{t}{C}+\frac{E\left(V_{t}^2\right)}{\xi^2}.\]
From Lemma \ref{1.L1}, we have $E(V_t^2)\rightarrow\xi^2$ as $t\rightarrow\infty$. Since $t\rightarrow\infty$ as $\omega\rightarrow0$ and ${t}/{C}\rightarrow0$ as $\omega\rightarrow0$, by using the above inequality, we conclude that
\[\frac{E(Q)}{C}\rightarrow1 \text{ as } \omega\rightarrow0.\]
\noindent\textbf{Proof of Theorem \ref{1.TH2}. }Define $T(u,v)=\frac{u}{v},$ for $v\neq0$. Let $u_N=\int_{0}^{1}\widehat{\alpha}_N(p)dp, v_N=\widehat{\mu}_N, u_0=\int_{0}^{1}{\alpha}(p)dp\text{ and } v_0={\mu}.$ By Taylor series expansion, we get\\
\resizebox{\textwidth}{!}{%
$T(u_N,v_N)=T(u_0,v_0)+\frac{(u_N-u_0)}{v_0}-\frac{u_0(v_N-v_0)}{(v_0)^2}-\frac{(u_N-u_0)(v_N-v_0)}{(v_0+p(v_N-v_0))^2}+\frac{2(u_0+p(u_N-u_0))(v_N-v_0)^2}{(v_0+p(v_N-v_0))^3},$
}
for $p\in(0,1).$ Multiplying the above equation by $\sqrt{N}$, we get
\begin{eqnarray}\label{1.9}
\sqrt{N}(T(u_N,v_N)-T(u_0,v_0))=\sqrt{N}\left(\frac{(u_N-u_0)}{v_0}-u_0\frac{(v_N-v_0)}{(v_0)^2}+R_N\right),
\end{eqnarray}
where
$$R_N=-\frac{(u_N-u_0)(v_N-v_0)}{(v_0+p(v_N-v_0))^2}+\frac{2(u_0+p(u_N-u_0))(v_N-v_0)^2}{(v_0+p(v_N-v_0))^3}.$$ Now, consider
\[\frac{\sqrt{N}(u_N-u_0)}{v_0}-\frac{u_0\sqrt{N}(v_N-v_0)}{(v_0)^2}=\frac{\sqrt{N}(u_N-u_C+u_C-u_0)}{v_0}-\frac{u_0\sqrt{N}(v_N-v_C+v_C-v_0)}{(v_0)^2}\]
\[=\frac{\sqrt{N}(u_N-u_C)}{v_0}-\frac{u_0\sqrt{N}(v_N-v_C)}{(v_0)^2}+\frac{\sqrt{N}(u_C-u_0)}{v_0}-\frac{u_0\sqrt{N}(v_C-v_0)}{(v_0)^2}\]
\[=A+B\times D,\]
where
\[A=\frac{\sqrt{N}(u_N-u_C)}{v_0}-\frac{u_0\sqrt{N}(v_N-v_C)}{(v_0)^2}, B=\sqrt{\frac{N}{C}},\]
\[D=\left(\frac{\sqrt{C}(u_C-u_0)}{v_0}-\frac{u_0\sqrt{C}(v_C-v_0)}{(v_0)^2}\right).\]
Since $\frac{N}{C}\xrightarrow{P}1$, we get from  Lemmas \ref{1.L3} and \ref{1.L10} that $A\xrightarrow{P}0$. By using \cite{BD2007}, we get $D\xrightarrow{d}N(0,\xi^2/4)$. Consequently, by using Slutsky's Theorem, we get
\begin{eqnarray}\label{1.10}
	\frac{\sqrt{N}(u_N-u_0)}{v_0}-\frac{u_0\sqrt{N}(v_N-v_0)}{(v_0)^2}\xrightarrow{d}N(0,\xi^2/4).
\end{eqnarray}
Again,
\[\sqrt{N}R_N=-\frac{(\sqrt{N}(u_N-u_0))(v_N-v_0)}{(v_0+p(v_N-v_0))^2}+\frac{2(u_0+p(u_N-u_0))(\sqrt{N}(v_N-v_0))(v_N-v_0)}{(v_0+p(v_N-v_0))^3},\]which, in view of Lemma \ref{1.L12} and Slutsky's Theorem, gives
\begin{eqnarray}\label{1.11}
	\sqrt{N}R_N\xrightarrow{P}0.
\end{eqnarray}
By using equations \eqref{1.9}, \eqref{1.10}, \eqref{1.11}, and Slutsky's Theorem, we get
\begin{eqnarray}\label{arna}
    \sqrt{N}(T(u_N,v_N)-T(u_0,v_0))\xrightarrow{d}N(0,\xi^2/4).
\end{eqnarray}
Now, we have $T(u_N,v_N)=\frac{u_N}{v_N}=\frac{\int_{0}^{1}\widehat{\alpha}_N(p)dp}{\widehat{\mu}_N}$ and $ T(u_0,v_0)=\frac{u_0}{v_0}=\frac{\int_{0}^{1}{\alpha}(p)dp}{\mu}$.
Further, we have $\widehat{G}_N=1-2(\frac{u_N}{v_N})\text{ and }G_X=1-2(\frac{u_0}{v_0})$. Consequently, we get $\sqrt{N}(\widehat{G}_N-G_X)=2\sqrt{N}(T(u_N,v_N)-T(u_0,v_0))$.
Then, by equation \eqref{arna}, we get
\begin{eqnarray}
	\sqrt{N}(\widehat{G}_N-G_X)\xrightarrow{d}N(0,\xi^2).
\end{eqnarray}
Further, by equation \eqref{arna1}, we have $V_N^2\xrightarrow{P}\xi^2$. Consequently, by equation \eqref{arna}, we get that
\[P\left\{\widehat{G}_{N}-z_{\frac{\alpha}{2}}\frac{V_N}{\sqrt{N}}<G_X<\widehat{G}_{N}+z_{\frac{\alpha}{2}}\frac{V_N}{\sqrt{N}}\right\}=P\left\{\Bigg|\frac{\sqrt{N}(\widehat{G}_N-G_X)}{V_N}\Bigg|<z_\frac{\alpha}{2}\right\}\rightarrow1-\alpha,\]
as $\omega\rightarrow$0. Hence the result.
\bibliographystyle{elsarticle-harv}
\bibliography{name}
\end{document}